\newtheorem{theorem}{Theorem}[section]
\newtheorem{prop}[theorem]{Proposition}
\newtheorem{defn}[theorem]{Definition}
\newtheorem{lemma}[theorem]{Lemma}
\newtheorem{coro}[theorem]{Corollary}
\newtheorem{prop-def}{Proposition-Definition}[section]
\newtheorem{remark}[theorem]{Remark}
\newcommand{\nc}{\newcommand}
\newcommand{\delete}[1]{}
\nc{\mlabel}[1]{\label{#1}}  
\nc{\mcite}[1]{\cite{#1}}  
\nc{\mref}[1]{\ref{#1}}  
\nc{\mbibitem}[1]{\bibitem{#1}} 
\nc{\bfk}{\mathbf{k}}
\nc{\Der}{\mathrm{Der}}
\nc{\Ker}{\mathrm{Ker}}
\begin{document}

\title{Rota-Baxter $3$-Lie algebras}

\author{RuiPu  Bai}
\address{College of Mathematics and Computer Science,
Hebei University, Baoding 071002, China}
\email{bairp1@yahoo.com.cn}

\author{Li Guo}
\address{
    Department of Mathematics and Computer Science,
         Rutgers University,
         Newark, NJ 07102, USA}
\email{liguo@rutgers.edu}

\author{Jiaqian  Li}
\address{College of Mathematics and Computer Science,
Hebei University, Baoding 071002, China}
\email{lijiaqianjiayou@163.com}

\author{Yong Wu}
\address{College of Mathematics and Computer Science,
Hebei University, Baoding 071002, China}
\email{wuyg1022@sina.com}

\date{\today}

\begin{abstract} In this paper we introduce the concepts of a Rota-Baxter operator and a differential operator
with weights on an $n$-algebra. We then focus on Rota-Baxter $3$-Lie algebras and show that they can be derived from
Rota-Baxter Lie algebras and pre-Lie algebras and from Rota-Baxter commutative associative algebras with derivations.
We also establish the inheritance property of Rota-Baxter $3$-Lie algebras.
\end{abstract}

\subjclass[2010]{17B05, 17D99.}

\keywords{ Rota-Baxter operator, Rota-Baxter $n$-algebra,
Rota-Baxter  $n$-Lie algebra, differential $n$-algebra, pre-Lie algebra, Lie triple system.}

\maketitle



\allowdisplaybreaks

\section{Introduction}

 $n$-Lie algebras~\cite{F} are a type of multiple  algebraic
systems appearing in many fields of mathematics and mathematical
physics~\cite{N, T, BL,HHM,HCK,G,P}. For example, the structure
of $3$-Lie algebras is applied to the study of the supersymmetry and
gauge symmetry transformations of the world-volume theory of
multiple coincident M2-branes; the Bagger-Lambert theory has a novel
local gauge symmetry which is based on a metric $3$-Lie algebra; the
identity in Eq.~(\mref{eq:nlie}) for a $3$-Lie algebra is essential to define the action
with $N=8$ supersymmetry; the $n$-Jacobi identity can be regarded as
a generalized Plucker relation in the physics literature, and so on.
The theory of $n$-Lie algebras has been widely studied~\cite{K,L,BSZ,BSZ1, BBW, BHB, AI}.

P. Ho, Y. Imamura and Y. Matsuo in paper \cite{HIM} studied two
derivations of the multiple $D2$ action from the multiple $M2$-brane
model proposed by Bagger-Lambert and Gustavsson. The first one is to
start from  $3$-Lie algebras given by arbitrary Lie algebras through
$2$-dimensional extensions.
 The first author and collaborators~\cite{BBW} realized $3$-Lie algebras by Lie
algebras and several linear functions. Filippov~\cite{F} constructed $(n-1)$-Lie algebras from $n$-Lie algebras by
fixing some non-zero vector in the multiplication. We are motivated
by this to construct $n$-Lie algebras.
However, it is not easy here due to the $n$-ary operation.
In order to avoid the complicated equations involving the structural
constants like in the classification~\cite{BSZ1}, it is
natural to consider to use Lie algebras and other better studied algebras to obtain $3$-Lie and $n$-Lie algebras.

In recent years, Rota-Baxter (associative) algebras, originated from the work of G. Baxter~\mcite{Ba} in probability and populated by
the work of Cartier and Rota~\mcite{Ca,Ro1,Ro2}, have also been studied in connection with many areas of mathematics
and physics, including combinatorics, number theory, operads and quantum field theory~\cite{Ag2,BBGN,EGK,Guw,Gub,GK1,GSZ,GZ,Ro1,Ro2}.
In particular Rota-Baxter algebras have played an important role in the Hopf algebra approach of renormalization of perturbative quantum field theory of Connes and Kreimer~\mcite{CK,EGK,EGM}, as well as in the application of the renormalization method in solving divergent problems in number theory~\mcite{GZ,MP}. Furthermore, Rota-Baxter operators on a Lie algebra are an operator form of the classical Yang-Baxter equations and contribute to the study of integrable systems~\cite{Bai,BGN,BGN2,STS}. Further Rota-Baxter Lie algebras are closely related to pre-Lie algebras and PostLie algebras.

Thus it is time to study $n$-Lie algebras and Rota-Baxter algebras together to get a suitable definition of Rota-Baxter $n$-Lie algebras. In this paper we investigate Rota-Baxter $n$-algebras in the context of associative and Lie algebras with focus on Rota-Baxter $3$-Lie algebras.
We establish a close relationship of our definition of Rota-Baxter $3$-Lie algebras with well-known concepts of Rota-Baxter associative, commutative or Lie algebras. This on one hand justifies the definition of Rota-Baxter $3$-Lie algebras and
and on the other hand provides a rich source of examples for Rota-Baxter $3$-Lie algebras.
The concepts of differential operators and Rota-Baxter operators with weights for general
(non-associative) algebras are introduced in Section~\mref{sec:rbn}. The duality of the two concepts are established.
In Section~\ref{sec:rbl3rbl} we extend the connections~\mcite{HuiB1,BBW,HuiB}
from Lie algebras and pre-Lie algebras to $3$-Lie algebras to the context of Rota-Baxter $3$-Lie algebras. In Section~\mref{sec:com3lie},
we construct Rota-Baxter $3$-Lie algebras from commutative (associative) Rota-Baxter algebras together with commuting derivations and suitable
linear forms. In Section~\mref{sec:rb3inh} we study the inheritance property of Rota-Baxter $3$-Lie algebras as in the case of Rota-Baxter Lie algebras. We also consider the refined case of Lie triple systems.

\section{Differential $n$-algebras and Rota-Baxter $n$-Lie algebras}
\mlabel{sec:rbn}

An {\bf $n$-Lie algebra} is a vector space $\frak g$ over a field $\bfk$ endowed with an $n$-ary multi-linear skew-symmetric operation
$[x_1, \cdots, x_n]$ satisfying the $n$-Jacobi identity
\begin{equation}
  [[x_1, \cdots, x_n], y_2, \cdots, y_n]=\sum_{i=1}^n[x_1, \cdots, [ x_i, y_2, \cdots, y_n], \cdots,
  x_n].
\mlabel{eq:nlie}
\end{equation}
In particular, a {\bf 3-Lie algebra} is a vector space $\frak g$ endowed with a ternary multi-linear skew-symmetric operation
\begin{equation}
[[x_1,x_2,x_3],y_2,y_3]=[[x_1,y_2,y_3],x_2,x_3] +[x_1,[x_2,y_2,y_3],x_3]+[x_1,x_2,[x_3,y_2,y_3]],
\end{equation}
for all $x_1,x_2,x_3, y_2, y_3\in \frak g$.
Under the skew-symmetric condition, the equation is equivalent to
\begin{equation}
[[x_1,x_2,x_3],y_2,y_3]=[[x_1,y_2,y_3],x_2,x_3] +[[x_2,y_2,y_3],x_3,x_1]+[[x_3,y_2,y_3],x_1,x_2].
\end{equation}

Let $(A, \cdot)$ be a $\bfk$-vector space with a binary operation $\cdot$ and let $\lambda\in \bfk$. If a linear map $P: A\rightarrow A$ satisfies
\begin{equation}
P(x)\cdot P(y)=P(P(x)\cdot y+x\cdot P(y)+\lambda x\cdot y) \ \text{ for all } x, y \in A,
\mlabel{eq:rb}
\end{equation}
then $P$ is called a {\bf Rota-Baxter operator of weight $\lambda$} and $(A, \cdot, P)$ is called a {Rota-Baxter
algebra of weight $\lambda$}.

If a linear map $D:A\to A$ satisfies
\begin{equation}
D(x\cdot y)=D(x)\cdot y+x\cdot D(y) \ \text{ for all } x, y\in A,
\end{equation}
then $D$ is
called a {\bf derivation} on $A$. Let $\Der(A)$ denote the set of all derivations of $A$.
More generally, a linear map $d:A\to A$ is called a {\bf derivation of weight $\lambda$}~\mcite{GK3} if
\begin{equation}
d(xy)=d(x)y+xd(y)+\lambda d(x)d(y), \quad \text{ for all } x, y\in A.
\end{equation}

We generalize the concepts of a Rota-Baxter operator and differential operator to $n$-algebras.
\begin{defn}
\mlabel{de:nlie} {\rm Let $\lambda\in k$ be fixed.
\begin{enumerate}
\item
An {\bf $n$-(nonassociative) algebra} over a field $\bfk$ is a
pair $(A, \langle , \cdots, \rangle)$ consisting of a vector space
$A$ over $\bfk$ and a multilinear
 multiplication
 $$\langle , \cdots, \rangle: A^{\otimes n} \rightarrow A.$$
\item
A {\bf derivation of weight $\lambda$} on an $n$-algebra $(A, \langle , \cdots, \rangle)$ is a
linear map $d: A\rightarrow A$ such that,
\begin{equation}
d(\langle x_1, \cdots,x_n\rangle) =\sum\limits_{\emptyset \neq I\subseteq [n]}\lambda^{|I|-1} \langle \check{d}(x_1),\cdots, \check{d}(x_i),\cdots, \check{d}(x_n)\rangle,
\mlabel{eq:dern}
\end{equation}
where
$\check{d}(x_i):=\check{d}_I(x_i):=\left \{\begin{array}{ll} d(x_i), & i\in I, \\ x_i, & i\not\in I\end{array} \right . \text{ for all } x_1,\cdots,x_n\in A.$
Then $A$ is called a {\bf differential $n$-algebra of weight $\lambda$.}
In particular, a {\bf differential $3$-algebra of weight $\lambda$} is a $3$-algebra $(A,\langle , , \rangle)$ with a linear map $d: A\to A$ such that
\begin{eqnarray}
d(\langle x_1,x_2,x_3\rangle )&=& \langle d(x_1),x_2,x_3\rangle +\langle x_1,d(x_2),x_3\rangle +\langle x_1,x_2,d(x_3)\rangle \notag \\
&&
+\lambda \langle d(x_1),d(x_2),x_3\rangle
+\lambda \langle d(x_1),x_2,d(x_3)\rangle
+\lambda \langle x_1,d(x_2),d(x_3)\rangle\\
&&+\lambda^2 \langle d(x_1),d(x_2),d(x_3)\rangle.
\notag
\end{eqnarray}
\item
A {\bf Rota-Baxter operator of weight $\lambda$} on $(A,\langle,\cdots,\rangle)$ is a linear map $P: A\rightarrow A$ such that
\begin{equation}
\langle P(x_1), \cdots, P(x_n)\rangle
=P\left( \sum\limits_{\emptyset \neq I\subseteq [n]}\lambda^{|I|-1} \langle \hat{P}(x_1), \cdots, \hat{P}(x_i), \cdots, \hat{P}(x_{n})\rangle\right),
\mlabel{eq:rbn}
\end{equation}
where
$\hat{P}(x_i):=\hat{P}_I(x_i):=\left\{\begin{array}{ll} x_i, & i\in I, \\ P(x_i), & i\not\in I \end{array}\right. \text{ for all } x_1,\cdots, x_n\in A.
$
Then $A$ is called a {\bf Rota-Baxter $n$-algebra of weight $\lambda$}.
In particular, a {\bf Rota-Baxter $3$-algebra} is a $3$-algebra $(A,\langle , , \rangle)$ with a linear map $P: A\to A$ such that
\begin{eqnarray}
\langle P(x_1),P(x_2),P(x_3)\rangle
&=& P\Big(
\langle P(x_1),P(x_2),x_3\rangle +\langle P(x_1),x_2,P(x_3)\rangle +\langle x_1,P(x_2),P(x_3)\rangle \notag \\
&&
+\lambda \langle P(x_1),x_2,x_3\rangle
+\lambda \langle x_1,P(x_2),x_3\rangle
+\lambda \langle x_1,x_2,P(x_3)\rangle
\mlabel{eq:rb3de}\\
&&
+\lambda^2 \langle x_1,x_2,x_3)\rangle\Big).
\notag
\end{eqnarray}
\end{enumerate}
}
\end{defn}

\begin{theorem}
Let $(A, \langle\ , \cdots, \rangle)$ be an $n$-algebra over $k$.
An invertible linear mapping $P: A\rightarrow A$ is a Rota-Baxter
operator of weight $\lambda$ on $A$ if and only if $P^{-1}$ is a
differential operator of weight $\lambda$  on $A$. \mlabel{thm:rbd}
\end{theorem}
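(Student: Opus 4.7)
The plan is to prove the equivalence by a single change of variables. I would apply $P^{-1}$ to both sides of the Rota-Baxter identity~(\ref{eq:rbn}) and then relabel by setting $y_i := P(x_i)$, equivalently $x_i = P^{-1}(y_i)$. Since $P$ is invertible, the tuple $(y_1,\dots,y_n)$ ranges over all of $A^n$ exactly as $(x_1,\dots,x_n)$ does, so the Rota-Baxter identity (quantified over all $x_i$) is equivalent to the relabeled identity (quantified over all $y_i$). The reverse implication is then the same argument run backwards: apply $P$ to the differential identity and substitute $y_i = P(x_i)$.

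The content of the proof is the observation that the hat-notation in~(\ref{eq:rbn}) matches the check-notation in~(\ref{eq:dern}) under this substitution. For any nonempty $I \subseteq [n]$, when $i \in I$ one has $\hat{P}_I(x_i) = x_i = P^{-1}(y_i)$, and when $i \notin I$ one has $\hat{P}_I(x_i) = P(x_i) = y_i$. These are exactly the values of $\check{d}_I(y_i)$ for $d = P^{-1}$. Therefore applying $P^{-1}$ to~(\ref{eq:rbn}) and substituting produces
\[
P^{-1}\bigl(\langle y_1,\dots,y_n\rangle\bigr) \;=\; \sum_{\emptyset \neq I \subseteq [n]} \lambda^{|I|-1}\,\langle \check{d}_I(y_1), \dots, \check{d}_I(y_n)\rangle,
\]
which is precisely the identity~(\ref{eq:dern}) defining a differential operator of weight $\lambda$ with $d = P^{-1}$.

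Since each step is reversible, this single calculation establishes both directions simultaneously. There is no genuine obstacle; the only subtlety worth spelling out is that the defining roles of $I$ in the two identities are complementary (in~(\ref{eq:rbn}) the set $I$ indexes the positions left un-$P$'d, while in~(\ref{eq:dern}) it indexes the positions where $d$ is applied), and this complementarity is exactly what makes the two notions swap under the correspondence $P \leftrightarrow P^{-1}$.
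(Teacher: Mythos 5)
Your proof is correct and follows essentially the same route as the paper's: both directions rest on the substitution $x_i\leftrightarrow P(x_i)$ together with the observation that $\hat{P}_I$ evaluated on $P^{-1}$ of the arguments reproduces exactly $\check{d}_I$ for $d=P^{-1}$ with the same index set $I$. The only cosmetic difference is that you note the computation is reversible and so treat both implications at once, whereas the paper writes out the converse as a separate (but identical in substance) calculation.
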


\begin{proof}
If an invertible linear mapping $P: A\rightarrow A$ is a Rota-Baxter
operator of weight $\lambda$, then for $x_1, \cdots, x_n\in A$, set $y_i=P^{-1}(x_i)$. Then by Eq. (\mref{eq:rbn}) we have

\begin{eqnarray*}
P^{-1}(\langle x_1, \cdots, x_n\rangle)
&=& P^{-1}(\langle P(y_1), \cdots, P(y_n)\rangle) \\
&=& \sum\limits_{\emptyset \neq I\subseteq [n]}\lambda^{|I|-1}
\langle \hat{P}(y_1), \cdots, \hat{P}(y_i), \cdots,
\hat{P}(y_{n})\rangle\\
&=&\sum\limits_{\emptyset \neq I\subseteq
[n]}\lambda^{|I|-1} \langle \hat{P}P^{-1}(x_1), \cdots,
\hat{P}P^{-1}(x_i), \cdots, \hat{P}P^{-1}(x_{n})\rangle\\
&=& \sum\limits_{\emptyset \neq I\subseteq
[n]}\lambda^{|I|-1} \langle \breve{P^{-1}}(x_1), \cdots,
\breve{P^{-1}}(x_i), \cdots, \breve{P^{-1}}(x_{n})\rangle.
\end{eqnarray*}
Therefore, $P^{-1}$ is a derivation of weight $\lambda$.

Conversely, let $P$ be a derivation of weight $\lambda$. For
$x_1, \cdots, x_n\in A$, by Eq. (\mref{eq:dern}) we have

\begin{eqnarray*}
P(\langle P^{-1}(x_1), \cdots, P^{-1}(x_n)\rangle)&=&\sum\limits_{\emptyset \neq I\subseteq [n]}\lambda^{|I|-1} \langle \check{P}P^{-1}(x_1),\cdots, \check{P}P^{-1}(x_i),\cdots, \check{P}P^{-1}(x_n)\rangle\\
&=&\sum\limits_{\emptyset \neq I\subseteq
[n]}\lambda^{|I|-1} \langle\widehat{ P^{-1}}(x_1),\cdots,
\widehat{P^{-1}}(x_i),\cdots, \widehat{P^{-1}}(x_n)\rangle.
\end{eqnarray*}
Therefore,
$$\langle P^{-1}(x_1), \cdots,
P^{-1}(x_n)\rangle=P^{-1}\left(\sum\limits_{\emptyset \neq I\subseteq [n]}\lambda^{|I|-1}
\langle\widehat{ P^{-1}}(x_1),\cdots, \widehat{P^{-1}}(x_i),\cdots,
\widehat{P^{-1}}(x_n)\rangle\right).$$
This proves the result.
\end{proof}

\begin{prop}
Let $(A, \circ, P)$ be an associative Rota-Baxter (resp. differential) algebra of weight $\lambda$.  Define an $n$-ary
multiplication $\langle\, , \cdots, \,\rangle$ on $A$ by
$$\langle\, x_1, \cdots, x_n\,\rangle= x_1 \circ \cdots \circ x_n \ \text{ for all } x_1, \cdots, x_n\in
A.$$
Then $(A, \langle, \cdots, \rangle, P)$ is a Rota-Baxter
(resp. differential) $n$-algebra of weight $\lambda$.
\mlabel{pp:asn}
\end{prop}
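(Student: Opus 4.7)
The plan is to prove both statements by induction on $n$, with the base case $n=2$ being precisely the defining identity of the binary Rota-Baxter (resp. differential) operator of weight $\lambda$. The induction leverages associativity of $\circ$: the $n$-fold product factors as $x_1 \circ (x_2 \circ \cdots \circ x_n)$, reducing to a binary product to which the hypothesis on $P$ (resp. $D$) applies directly, with the inner $(n-1)$-fold factor handled by the inductive hypothesis.

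For the Rota-Baxter case, I would write
$$\langle P(x_1), \ldots, P(x_n)\rangle = P(x_1) \circ \bigl(P(x_2) \circ \cdots \circ P(x_n)\bigr),$$
use induction to express $P(x_2) \circ \cdots \circ P(x_n) = P(S)$, where
$$S := \sum_{\emptyset \neq J \subseteq \{2,\ldots,n\}} \lambda^{|J|-1}\, \hat{P}_J(x_2) \circ \cdots \circ \hat{P}_J(x_n),$$
and apply the binary weight-$\lambda$ identity to obtain
$$P(x_1) \circ P(S) = P\bigl(P(x_1)\circ S + x_1 \circ P(S) + \lambda\, x_1 \circ S\bigr).$$
The central step is then a combinatorial rearrangement: the three summands inside $P$ reassemble into $\sum_{\emptyset \neq I \subseteq [n]} \lambda^{|I|-1}\, \hat{P}_I(x_1) \circ \cdots \circ \hat{P}_I(x_n)$ by partitioning nonempty $I \subseteq [n]$ into three classes. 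If $1 \notin I$, the term comes from $P(x_1) \circ S$ (set $I = J$, and the coefficients $\lambda^{|I|-1} = \lambda^{|J|-1}$ match). If $I = \{1\}$, it comes from $x_1 \circ P(S) = x_1 \circ P(x_2) \circ \cdots \circ P(x_n)$ with coefficient $\lambda^0$. If $1 \in I$ and $|I| \geq 2$, write $I = \{1\} \cup J$ with $\emptyset \neq J \subseteq \{2,\ldots,n\}$; the term arises from $\lambda\, x_1 \circ S$ with combined coefficient $\lambda \cdot \lambda^{|J|-1} = \lambda^{|I|-1}$. These cases are disjoint and exhaustive, completing the induction.

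The differential case runs in parallel, starting from the binary weight-$\lambda$ Leibniz identity $D(x_1 \circ y) = D(x_1)\circ y + x_1 \circ D(y) + \lambda\, D(x_1)\circ D(y)$ with $y = x_2 \circ \cdots \circ x_n$, expanding $D(y)$ via the inductive hypothesis in the form of Eq.~(\ref{eq:dern}), and performing the analogous subset partition applied to $\check{d}_I(x_i)$ rather than $\hat{P}_I(x_i)$. The only obstacle in either case is the bookkeeping in the subset partition; once that is verified for the Rota-Baxter case the differential case is identical modulo notation, and no new algebraic idea beyond associativity and the binary identity is required.
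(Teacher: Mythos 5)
Your proof is correct and follows essentially the same route as the paper: induction on $n$ starting from the binary identity, peeling off one factor by associativity, applying the weight-$\lambda$ identity, and regrouping the resulting three sums according to whether the distinguished index lies in $I$ (the paper peels off the last factor $P(x_n)$ rather than the first, which is immaterial). Your subset bookkeeping matches the paper's convention for $\hat{P}_I$ exactly, and your remark that the differential case is identical modulo notation is also how the paper disposes of it.
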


\begin{proof}
We prove the case of Rota-Baxter algebras by induction on $n\geq 2$. The case of differential algebras is the same. There is nothing to prove when $n=2$. Suppose it is true for the case $n-1\geq 2$. Then

\begin{eqnarray*}
\lefteqn{\langle P(x_1), \cdots, P(x_n)\rangle=(P(x_1)\circ \cdots \circ
P(x_{n-1}))\circ P(x_n)}\\
&=&P(\sum\limits_{I\subseteq [n-1]}\lambda^{|I|-1}
\langle\hat{P}(x_1), \cdots, \hat{P}(x_i),\cdots,
\hat{P}(x_{n-1})\rangle) \circ P(x_n)\\
&=&P\Big\{P(\sum\limits_{I\subseteq [n-1]}\lambda^{|I|-1}
\langle\hat{P}(x_1), \cdots, \hat{P}(x_i),\cdots,
\hat{P}(x_{n-1})\rangle) \circ x_n\\
&&+\big(\sum\limits_{I\subseteq
[n-1]}\lambda^{|I|-1} \langle\hat{P}(x_1), \cdots,
\hat{P}(x_i),\cdots, \hat{P}(x_{n-1})\rangle\big) \circ P(x_n)\\
&&+\lambda \big(\sum\limits_{I\subseteq
[n-1]}\lambda^{|I|-1} \langle\hat{P}(x_1), \cdots,
\hat{P}(x_i),\cdots, \hat{P}(x_{n-1})\rangle\big) \circ x_n\Big\}.
\end{eqnarray*}
The first sum inside $P$ gives $\langle P(x_1),\cdots,P(x_{n-1}),x_n\rangle$. Together with the third sum inside $P$, we obtain
$$ \sum\limits_{\emptyset \neq I\subseteq [n], n\in I} \lambda^{|I|-1} \langle \hat{P}(x_1),\cdots, \hat{P}(x_n)\rangle.$$
The second sum inside $P$ is

$$\sum\limits_{\emptyset \neq I\subseteq [n], n\notin I}\lambda^{|I|-1}
\langle\hat{P}(x_1), \cdots, \hat{P}(x_i),\cdots,
\hat{P}(x_{n})\rangle.
$$
Therefore, $P$ is  a Rota-Baxter operator of weight $\lambda$ on the $n$-algebra $(A, \langle, \cdots,
\rangle)$.
\end{proof}

\section{Rota-Baxter $3$-Lie algebras from  Rota-Baxter Lie algebras and pre-Lie algebras}
\mlabel{sec:rbl3rbl}

In this section, we study the realizations of Rota-Baxter $3$-Lie algebras from  Rota-Baxter  Lie algebras in Section~\mref{ss:rbl3rbl} and from Rota-Baxter pre-Lie algebras in Section~\mref{ss:prelie3lie}.

\subsection{Rota-Baxter $3$-Lie algebras from Rota-Baxter Lie algebras}
\mlabel{ss:rbl3rbl}

By Definition~\mref{de:nlie}, suppose $(L, [, , ], P)$ is a Rota-Baxter $3$-Lie
algebra of weight $\lambda$. Then the linear map $P: L \rightarrow
L$ satisfies
\begin{eqnarray}
[P(x_1),P(x_2),P(x_3)] &=& P\Big( [P(x_1),P(x_2),x_3] +[
P(x_1),x_2,P(x_3)] +[ x_1,P(x_2),P(x_3)] \notag \\
&&+ \lambda[ P(x_1),x_2,x_3] +\lambda [
x_1,P(x_2),x_3] + \lambda[ x_1,x_2,P(x_3)]
\label{eq:3lie}
\\
&&+ \lambda^2[x_1,x_2,x_3)]\Big) \quad \text{ for all } x_1, x_2, x_3\in L.\notag
\end{eqnarray}

We recall the following result from Lie algebras to $3$-Lie algebras.
\begin{lemma}\cite{BBW} \mlabel{lem:bbw}
  Let $(L, [ , ])$ be a Lie algebra,
and let $f\in L^*:=\mathrm{Hom}(L,\bfk)$  satisfy $f([x, y])=0$ for $x, y\in L$. Then
$L$ is a $3$-Lie algebra with the multiplication
\begin{equation}
[x, y, z]_f:=f(x)[y, z]+f(y)[z, x]+f(z)[x, y] \quad  ~~ \text{for all }  x, y, z\in L.
\mlabel{eq:f3}
\end{equation}
\end{lemma}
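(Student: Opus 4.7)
The plan is to verify two things: (i) the ternary operation $[\,,\,,\,]_f$ is alternating, and (ii) it satisfies the ternary Jacobi identity. Skew-symmetry is immediate from the shape of the formula: writing $[x,y,z]_f = f(x)[y,z]+f(y)[z,x]+f(z)[x,y]$, the transposition of any two arguments (say $x$ and $y$) produces an overall sign coming from $[a,b]=-[b,a]$ in $L$, so no use of the hypothesis on $f$ is required.

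The bulk of the argument is the ternary Jacobi identity. The crucial preliminary observation is that $f$ annihilates every triple bracket:
\[
f([x,y,z]_f)=f(x)f([y,z])+f(y)f([z,x])+f(z)f([x,y])=0,
\]
by the hypothesis $f([L,L])=0$. Using this, each of the four outer triple brackets $[[x_1,x_2,x_3]_f,y_2,y_3]_f$, $[[x_1,y_2,y_3]_f,x_2,x_3]_f$, $[x_1,[x_2,y_2,y_3]_f,x_3]_f$, $[x_1,x_2,[x_3,y_2,y_3]_f]_f$ collapses from three outer cyclic summands to two, since the summand weighted by $f$ of the inner triple bracket vanishes.

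After this collapse, every term appearing on either side of the Jacobi identity has the shape $f(a)f(b)\,[\,\cdot\,,[\,\cdot\,,\,\cdot\,]]$ with $a,b\in\{x_1,x_2,x_3,y_2,y_3\}$. I would then sort these terms by the unordered pair $\{a,b\}$. The pairs $\{x_i,x_j\}$ with $i\neq j$ occur only on the right and always as two copies whose inner Lie brackets are negatives of each other, hence cancel by $[u,v]=-[v,u]$. The pairs $\{y_i,x_j\}$ appear on both sides, and in each case a single term on the left matches a sum of two terms on the right via one instance of the binary Jacobi identity $[y,[u,v]]=[[y,u],v]+[u,[y,v]]$ in $L$.

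The main obstacle is purely bookkeeping: the four nested triple brackets generate roughly two dozen summands which must be cross-referenced in order to confirm the claimed pairing. All conceptual input reduces to the two ingredients above, namely that $f$ annihilates commutators (so it annihilates triple brackets as well) and the binary Jacobi identity of $(L,[\cdot,\cdot])$.
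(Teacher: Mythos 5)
Your proposal is correct, but it cannot be compared line-by-line with anything in the paper: the lemma is quoted from the reference [BBW] (``Realizations of 3-Lie algebras'') and no proof is reproduced here, so you have supplied the verification that the authors delegate to the literature. Your two structural observations are exactly the right ones. Skew-symmetry indeed follows from the shape of the formula alone, and the identity $f([x,y,z]_f)=0$ is the key reduction: it kills the term $f([x_1,x_2,x_3]_f)[y_2,y_3]$ on the left and the analogous term in each of the three summands on the right, so each nested bracket contributes only two outer summands. The resulting count is $6$ terms on the left and $18$ on the right; the six right-hand terms carrying coefficients $f(x_i)f(x_j)$ ($i\neq j$) all have inner factor $[y_2,y_3]$ and cancel in pairs by antisymmetry of $[\,,\,]$, and the remaining twelve match the six left-hand terms two-to-one through the binary Jacobi identity, e.g.
\begin{equation*}
f(y_2)f(x_1)\,[y_3,[x_2,x_3]]=f(y_2)f(x_1)\bigl([[y_3,x_2],x_3]+[x_2,[y_3,x_3]]\bigr).
\end{equation*}
I checked a representative sample of the pairings and they close up as you claim, so the bookkeeping you flag as the main obstacle does go through. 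The only point worth making explicit in a final write-up is that every pair $\{a,b\}$ occurring on either side is of type $\{x_i,x_j\}$ or $\{y_i,x_j\}$ (no $\{y_2,y_3\}$ pairs survive the collapse), so your case analysis is exhaustive.
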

We now combine this result with Rota-Baxter operators.
\begin{theorem} Let $(L, [ , ], P)$ be a
Rota-Baxter Lie algebra of weight $\lambda$, $f\in L^{\ast}$ satisfying $f([x,y])=0$ for all $x, y\in L$. Then $P$ is a Rota-Baxter operator on the $3$-Lie algebra $(L, [ , , ]_f)$ define in Eq.~$($\mref{eq:f3}$)$ if and only if $P$
satisfies
\begin{equation}f(x)[P(y),P(z)]+f(y)[P(z), P(x)]+f(z)[P(x),P(y)]\in \Ker
(P+\lambda Id_L) \ \text{ for all } x, y, z\in L, \end{equation}
where $Id_L:
L\rightarrow L$ is the identity map. \mlabel{thm:rb3lief}
\end{theorem}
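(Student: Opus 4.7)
My plan is to unfold both sides of the Rota-Baxter identity~(\ref{eq:3lie}) for $P$ on the $3$-Lie algebra $(L,[\,,\,,\,]_f)$ using the defining formula~(\ref{eq:f3}), and then to compare using the ordinary Lie-algebra Rota-Baxter identity on $(L,[\,,\,])$. The essential observation is that each value $f(u)$ is a scalar and therefore commutes with, and can be pulled out of, the linear map $P$.

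Expanding each of the seven summands inside the outer $P$ on the right-hand side of~(\ref{eq:3lie}) by~(\ref{eq:f3}) produces $21$ scalar-bracket terms. I would sort these into six groups by their $f$-factor: three headed by $f(P(x_i))$ and three by $f(x_i)$, for $i=1,2,3$. Setting
$$A_i := [P(x_{i+1}),x_{i+2}]+[x_{i+1},P(x_{i+2})]+\lambda[x_{i+1},x_{i+2}] \qquad (\text{indices mod }3),$$
the Lie-algebra Rota-Baxter identity reads $P(A_i)=[P(x_{i+1}),P(x_{i+2})]$. A direct count then shows that the coefficient of $f(P(x_i))$ in the reorganized sum is exactly $A_i$, while the coefficient of $f(x_i)$ is $P(A_i)+\lambda A_i = (P+\lambda Id_L)(A_i)$.

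Pulling each $f$-scalar outside the outer $P$ and applying $P(A_i)=[P(x_{i+1}),P(x_{i+2})]$, the first three groups reproduce exactly $[P(x_1),P(x_2),P(x_3)]_f$, which is the left-hand side of~(\ref{eq:3lie}). Using the commutation $P\circ(P+\lambda Id_L)=(P+\lambda Id_L)\circ P$, the remaining three groups collectively contribute
$$(P+\lambda Id_L)\Bigl(\sum_{i=1}^{3} f(x_i)\,[P(x_{i+1}),P(x_{i+2})]\Bigr).$$
Therefore~(\ref{eq:3lie}) holds for all $x_1,x_2,x_3\in L$ if and only if this element vanishes identically, i.e., if and only if $\sum_{i} f(x_i)[P(x_{i+1}),P(x_{i+2})]$ lies in $\Ker(P+\lambda Id_L)$, which is precisely the stated condition (after relabeling $x_1,x_2,x_3$ as $x,y,z$).

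The hypothesis $f([x,y])=0$ is used only implicitly, through Lemma~\ref{lem:bbw}, to guarantee that $[\,,\,,\,]_f$ is a genuine $3$-Lie bracket; it plays no further role in the Rota-Baxter verification itself. The main difficulty is purely combinatorial bookkeeping: tracking which of the $21$ expanded terms contributes to which $f$-factor group and collecting the $\lambda$-powers correctly into the $(P+\lambda Id_L)$ factor on the $f(x_i)$-coefficients. Once that tabulation is carried out, the Lie-algebra Rota-Baxter identity and the linearity of $P$ furnish the rest of the argument.
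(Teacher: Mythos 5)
Your proposal is correct and follows essentially the same route as the paper: both expand the seven terms on the right-hand side of the Rota--Baxter identity via the bracket $[\,,\,,\,]_f$, regroup by the scalar factors $f(P(x_i))$ and $f(x_i)$, and use the Lie-algebra Rota--Baxter identity together with $P\circ(P+\lambda Id_L)=(P+\lambda Id_L)\circ P$ to arrive at the key identity that the right-hand side equals $[P(x),P(y),P(z)]_f$ plus $(P+\lambda Id_L)$ applied to $f(x)[P(y),P(z)]+f(y)[P(z),P(x)]+f(z)[P(x),P(y)]$. Your bookkeeping of the $21$ terms and the coefficients $A_i$ and $(P+\lambda Id_L)(A_i)$ checks out, so nothing further is needed.
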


\begin{proof}
By Lemma~\mref{lem:bbw}, $L$ is a $3$-Lie algebra with the multiplication $[ , ,
]_f$ defined in Eq.~(\mref{eq:f3}). Now for arbitrary $x, y, z\in L$,
\begin{eqnarray*}
\lefteqn{[P(x),P(y),P(z)]_f =f(P(x))[P(y),P(z)]+f(P(y))[P(z),P(x)]+f(P(z))[P(x),P(y)] } \\
&=&P\Big(f(P(x)([P(y),z]+[y, P(z)])+f(P(y))([P(z), x]+[z,
P(x)])\\
&&+f(P(z))([P(x),y]+[x, P(y)]) +\lambda(f(P(x))[y, z]+f(P(y))[z, x]+f(P(z))[x, y])\Big).
\end{eqnarray*}
Applying Eqs. (\mref{eq:rb}), (\mref{eq:f3}) and regrouping, we obtain
\begin{eqnarray*}
&&P\Big([P(x),P(y),z]_f+[P(x),y,P(z)]_f+[x,P(y),P(z)]_f\\
&&+\lambda ([P(x), y, z]_f+[x,P(y), z]_f+[x, y, P(z)]_f)+\lambda^2[x,
y, z]_f\Big)\\
&=&P\Big(f(P(x))([P(y),z]+[y,P(z)])+f(P(y))([z,P(x)] +[P(z),x])\\
&&+f(P(z))([P(x),y]+[x,P(y)])+\lambda
(f(P(x))[y,z]+f(P(y))[z,x]+f(P(z))[x,
y])\\
&&+\lambda (f(x)([P(y), z]+[z, P(y)]+\lambda
[y, z])+f(y)([z,P(x)]+[P(z), x])+\lambda [z, x])\\
&&+f(z)([P(x),y]+[x,P(y)]+\lambda [x,
y])+f(x)[P(y),P(z)]+f(y)[P(z),P(x)]+f(z)[P(x),P(y)]\Big)\\
&=&[P(x),P(y),P(z)]_f+(P+\lambda
Id_{L})(f(x)[P(y),P(z)]+f(y)[P(z),P(x)]+f(z)[P(x),P(y)]).
\end{eqnarray*}
Then the theorem follows.
\end{proof}

 \begin{coro} Let $(L, [ , ], P)$ be a
Rota-Baxter Lie algebra of weight zero, $f\in L^{\ast}$ satisfy
$f([x,y])=0$ for all $x, y\in L$. Then $P$ is a Rota-Baxter
operator on the $3$-Lie algebra $(L, [ , , ]_f)$ if and only if $P$ satisfies

\begin{equation}
[f(x)P(y)-f(y)P(x),z]+[f(y)P(z)-f(z)P(y),x]+[f(z)P(x)-f(x)P(z), y]\in \Ker  P^2
\mlabel{eq:rb3lief}
\end{equation}
for $x, y, z\in L$, where, $[ , , ]_f$ is defined in
Eq.(\mref{eq:f3}).

In particular, if  $P^2=0$, then for every $f\in L^*$ satisfying
$f([x,y])=0$, $P$ is a Rota-Baxter operator on the $3$-Lie algebra
$(L, [ , , ]_f)$. \mlabel{co:rb3lief}
\end{coro}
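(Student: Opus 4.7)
The plan is to specialize Theorem~\mref{thm:rb3lief} to $\lambda=0$ and then rewrite the resulting kernel condition using the weight-zero Rota-Baxter identity. Setting $\lambda=0$ makes $P+\lambda\, Id_L=P$, so Theorem~\mref{thm:rb3lief} becomes: $P$ is a Rota-Baxter operator on $(L,[\,,\,,\,]_f)$ if and only if
$$ S(x,y,z):=f(x)[P(y),P(z)]+f(y)[P(z),P(x)]+f(z)[P(x),P(y)]\ \in\ \Ker P $$
for all $x,y,z\in L$. The task is therefore to show that membership of $S(x,y,z)$ in $\Ker P$ is equivalent to membership in $\Ker P^2$ of the expression displayed in~(\mref{eq:rb3lief}).

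The key step is to apply the weight-zero Rota-Baxter relation $[P(a),P(b)]=P([P(a),b]+[a,P(b)])$ (which is Eq.~(\mref{eq:rb}) with $\lambda=0$ for the Lie bracket) to each of the three commutators in $S(x,y,z)$. Pulling $P$ outside produces $S(x,y,z)=P(T(x,y,z))$, where $T(x,y,z)$ is the sum of the six terms $f(x)[P(y),z]$, $f(x)[y,P(z)]$, $f(y)[P(z),x]$, $f(y)[z,P(x)]$, $f(z)[P(x),y]$, $f(z)[x,P(y)]$.

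Next I would regroup these six terms according to which of $x,y,z$ occupies the second slot of the bracket. Using skew-symmetry (e.g.\ $[z,P(x)]=-[P(x),z]$) together with scalar linearity, the two terms whose second argument is $z$ combine into $[f(x)P(y)-f(y)P(x),z]$, and analogously for the pairs ending in $x$ and in $y$. Thus $T(x,y,z)$ equals precisely the left-hand side of~(\mref{eq:rb3lief}). Since $S(x,y,z)\in \Ker P$ is equivalent to $P^2(T(x,y,z))=0$, i.e.\ $T(x,y,z)\in\Ker P^2$, the equivalence claimed in the corollary follows.

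The ``In particular'' statement is then immediate: if $P^2=0$ then $\Ker P^2=L$, so the condition holds automatically for every $f\in L^{\ast}$ with $f([x,y])=0$, and $P$ is a Rota-Baxter operator on $(L,[\,,\,,\,]_f)$. No genuine obstacle is expected; the only care required is sign bookkeeping in the regrouping step, which is routine given skew-symmetry.
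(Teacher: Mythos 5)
Your proposal is correct and follows essentially the same route as the paper: specialize Theorem~\ref{thm:rb3lief} to $\lambda=0$, expand each $[P(\cdot),P(\cdot)]$ via the weight-zero Rota--Baxter identity to pull out one factor of $P$, and regroup by skew-symmetry to recognize the bracket expression in Eq.~(\ref{eq:rb3lief}), so that the $\Ker P$ condition becomes a $\Ker P^2$ condition. If anything, your bookkeeping is slightly more careful than the paper's, whose displayed identity writes $P^2$ where only a single $P$ has been extracted.
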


\begin{proof} Let $(L, [ , ], P)$ be a
Rota-Baxter Lie algebra of weight zero, $f\in L^{\ast}$ satisfying $f([x,y])=0$ for $x, y\in L$. By Eq.~(\mref{eq:rb}) we have

\begin{eqnarray*}
&&f(x)[P(y),P(z)]+f(y)[P(z),P(x)]+f(z)[P(x),P(y)]\\
&=&P^2([f(x)P(y)-f(y)P(x),z]+[f(y)P(z)-f(z)P(y),x]+[f(z)P(x)-f(x)P(z), y]).
\end{eqnarray*}
Then the corollary follows from Theorem~\mref{thm:rb3lief}.
\end{proof}

\subsection{Rota-Baxter $3$-Lie algebras from Rota-Baxter  pre-Lie algebras}
\mlabel{ss:prelie3lie}

In this section, we study the realizations of Rota-Baxter $3$-Lie
algebras by Rota-Baxter associative algebras and Rota-Baxter pre-Lie
algebras. First we recall some properties (cf. \cite{HuiB}).

Let $L$ be a vector space over a field $F$
with a bilinear product $\ast$ satisfying

\begin{equation} (x*y)\ast z-x\ast (y\ast z) =
(y\ast x)\ast z-y\ast (x\ast z) \ \text{ for all } x, y, z\in L.
\mlabel{eq:prelie}
\end{equation}
Then $(L,\ast)$ is called a {\bf pre-Lie algebra}.
It is obvious that all
associative algebras are pre-Lie algebras. For a pre-Lie algebra $L$, the commutator

\begin{equation}[x, y]_{\ast}: = x\ast y - y\ast x,
\mlabel{eq:skewsym}
\end{equation}
defines a Lie algebra $ G(L)=(L, [ . ]_{\ast})$, called the
{\bf sub-adjacent Lie algebra} of the pre-Lie algebra $L$.

If a linear mapping $P: L\rightarrow L$ is a Rota-Baxter operator of
weight $\lambda$ on a pre-Lie algebra $(L, \ast)$, that is, $P$
satisfies
$$ P(x)\ast P(y)=P(P(x)\ast y+x\ast P(y)+\lambda x\ast y) \ \text{ for all } x, y\in L,$$
then $P$ is a Rota-Baxter operator of weight $\lambda$ on its
sub-adjacent Lie algebra $G(L)=(L, [ , ]_{\ast}).$

The following facts on relationship among various Rota-Baxter algebras can be easily verified from definitions.
\begin{lemma}
\begin{enumerate}
\item Let $(L, \ast, P)$ be a  Rota-Baxter
pre-Lie algebra of weight $\lambda$. Then $(L, [ , ]_{\ast}, P)$
is a Rota-Baxter Lie algebra of weight $\lambda$,  where the
multiplication $[ , ]_{\ast}$ is defined in Eq.~$($\mref{eq:skewsym}$)$
\mlabel{it:rbliea}
\item
  If $(L, \ast, P)$ is a Rota-Baxter pre-Lie algebra of weight
$\lambda$. Then $(L, \cdot, P)$ is a Rota-Baxter pre-Lie algebra of
weight $\lambda$, where

\begin{equation} x \cdot y: = P(x)\ast y-y\ast P(x)+ \lambda x\ast y\quad  \text{ for all } x, y\in
L.
\mlabel{eq:circ1}
\end{equation}
\mlabel{it:rblieb}
\item
Let $(A, \cdot, P )$ be a Rota-Baxter commutative associative algebra of weight $\lambda$, $D$ be a derivation on the algebra
$(A, \cdot)$ that satisfies $DP=PD$. Then $(A, \ast, P)$ is a
Rota-Baxter pre-Lie algebra of weight $\lambda$, where
\begin{equation}x\ast
y=D(x)\cdot y \quad \text{ for all } x, y\in A.
\mlabel{eq:astd}
\end{equation}  Therefore by Item~\mref{it:rblieb}, we
get a Rota-Baxter Lie algebra $(A, [ , ]_{\ast})$, where
\begin{equation}[x, y]_{\ast}=D(x)\cdot y-D(y)\cdot x \quad \text{ for all } x, y\in A.
\mlabel{eq:brad}
\end{equation}
\mlabel{it:rblied}
\end{enumerate} \mlabel{lem:rblie}
\end{lemma}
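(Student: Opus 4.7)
All three items are direct verifications from the stated definitions, combining the Rota-Baxter identity Eq.~(\mref{eq:rb}), the pre-Lie identity Eq.~(\mref{eq:prelie}), and (for item (c)) the Leibniz rule for $D$ together with the hypothesis $DP=PD$. For item (a), the fact that the sub-adjacent bracket satisfies the Jacobi identity is classical (already alluded to in the paragraph preceding the lemma), obtained by antisymmetrizing Eq.~(\mref{eq:prelie}) in all three arguments. Only the weight-$\lambda$ Rota-Baxter identity for $P$ on $[\,,\,]_*$ needs checking: I would compute $[P(x), P(y)]_* = P(x)*P(y) - P(y)*P(x)$, apply Eq.~(\mref{eq:rb}) for $*$ to each of the two products, and recognize the six resulting terms on the right-hand side as $P$ applied to $[P(x), y]_* + [x, P(y)]_* + \lambda [x, y]_*$.

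For item (b) two identities must be checked. The pre-Lie identity for $\cdot$ is the more involved: I would substitute $x \cdot y = P(x)*y - y*P(x) + \lambda x*y$ into $(x \cdot y) \cdot z - x \cdot (y \cdot z)$, expanding into a sum of products of the forms $P(a)*b$, $a*P(b)$ and $P(a)*P(b)$; replacing every product of the last form by Eq.~(\mref{eq:rb}) and then applying the pre-Lie identity Eq.~(\mref{eq:prelie}) for $*$ to the remaining associator-type terms should cause the part antisymmetric in $(x,y)$ to collapse to zero. The Rota-Baxter identity for $P$ on $\cdot$ is a shorter calculation: one expands $P(x) \cdot P(y) = P^2(x)*P(y) - P(y)*P^2(x) + \lambda P(x)*P(y)$, applies Eq.~(\mref{eq:rb}) to each summand, and regroups to match $P$ applied to $P(x) \cdot y + x \cdot P(y) + \lambda x \cdot y$.

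For item (c), the pre-Lie identity for $x * y := D(x) \cdot y$ is verified by expanding $(x*y)*z - x*(y*z) = D(D(x) \cdot y) \cdot z - D(x) \cdot (D(y) \cdot z)$, distributing the derivation via the Leibniz rule, and invoking commutativity and associativity of $\cdot$ to reduce the associator to a form manifestly symmetric in the first two arguments. The Rota-Baxter identity for $P$ on $*$ then reduces, via $D P = P D$ together with commutativity of $\cdot$, to the Rota-Baxter identity for $P$ on $(A, \cdot)$. The final assertion about $(A, [\,,\,]_*)$ being a Rota-Baxter Lie algebra is immediate from item (a) applied to the pre-Lie algebra $(A, *)$ just constructed, and the explicit formula $[x, y]_* = D(x) \cdot y - D(y) \cdot x$ follows from the definition of $*$. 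The main obstacle is the pre-Lie verification in item (b), where the chain of substitutions (Eq.~(\mref{eq:rb}) for $*$ followed by Eq.~(\mref{eq:prelie}) for $*$) generates a large collection of terms; I would organize the calculation by working modulo the symmetrization in $(x, y)$, which reduces by half the number of terms that need to be tracked and should keep the bookkeeping under control.
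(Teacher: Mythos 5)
The paper itself gives no argument for this lemma (it is introduced only with the remark that the facts ``can be easily verified from definitions''), so the only question is whether your verifications actually go through. Item (a) is correct, as is the Rota-Baxter identity in item (b) and the Rota-Baxter identity and final deductions in item (c). But both pre-Lie verifications you describe fail, and you assert cancellations that do not occur. In item (c), the associator of $x\ast y=D(x)\cdot y$ is $(x\ast y)\ast z-x\ast(y\ast z)=D^2(x)\cdot y\cdot z$, which is symmetric in $y$ and $z$ but \emph{not} in $x$ and $y$ (take $A=\bfk[t]$, $D=d/dt$, $x=t^2$, $y=z=1$). So $(A,\ast)$ is a right pre-Lie algebra rather than a pre-Lie algebra in the sense of Eq.~(\mref{eq:prelie}), and your claim that the associator becomes ``manifestly symmetric in the first two arguments'' is simply false. (The commutator Eq.~(\mref{eq:brad}) is still a Lie bracket, which is all the later sections need, but the verification as you state it does not exist.)

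In item (b) the antisymmetric part does not collapse to zero when $\lambda\neq 0$. A quick way to see this: $P=-\lambda\,\mathrm{Id}_L$ is always a Rota-Baxter operator of weight $\lambda$, and for it Eq.~(\mref{eq:circ1}) gives $x\cdot y=\lambda\, y\ast x$, the opposite product; the opposite of an algebra satisfying Eq.~(\mref{eq:prelie}) satisfies Eq.~(\mref{eq:prelie}) only when the associator of $\ast$ is also symmetric in its last two arguments, which is an extra condition (the vector-field pre-Lie algebra $f\partial\ast g\partial=fg'\partial$ violates it). If you carry out the expansion you outline, after every legitimate application of Eqs.~(\mref{eq:rb}) and (\mref{eq:prelie}) there remains the residue
$\lambda\big((y\ast z)\ast P(x)-y\ast(z\ast P(x))-(x\ast z)\ast P(y)+x\ast(z\ast P(y))\big)$,
which is not zero in general. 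Only the weight-zero case goes through as you describe, where $x\cdot y=[P(x),y]_{\ast}$ and one can argue entirely inside the sub-adjacent Lie algebra; and weight zero is in fact all that the later applications (Theorem~\mref{thm:3lierb}) use. Both failures originate in the statement of the lemma itself (a left/right pre-Lie convention clash, and the sign/form of the $\lambda$-term in Eq.~(\mref{eq:circ1}), which should be compared with the source \cite{HuiB}), but a proof attempt must detect them rather than declare the computation finished.
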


\begin{theorem} \mlabel{thm:frb3lie}
Let $(L, \ast, P)$ be a Rota-Baxter
pre-Lie algebra of weight $\lambda$ and let $f\in L^{\ast}$ satisfying
$f(x\ast y-y\ast x)=0$ for $x, y\in L$. Define
\begin{equation}[ x,
y, z]_f=f(x)(y\ast z-z\ast y)+f(y)(z\ast x-x\ast z)+f(z)( x\ast y-y
\ast x) ~~ \text{ for all } x, y, z\in L.
\label{eq:braf}
\end{equation}
Then $P$
 is  a Rota-Baxter operator on $3$-Lie algebra $(L, [ , ,
]_f)$ if and only if
\begin{eqnarray}
\lefteqn{f(x)(P(y)\ast P(z)-P(z)\ast P(y))+f(y)(P(z)\ast P(x)-P(x)\ast P(z))}
\label{eq:fprelie}\\
&&+f(z)(P(x)\ast P(y)-P(y)\ast P(x))\in \Ker  (P+\lambda Id_L) \ \text{ for all } x, y, z\in
L. \notag
\end{eqnarray}
\end{theorem}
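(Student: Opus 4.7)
The plan is to reduce the statement to Theorem~\mref{thm:rb3lief} (the Lie algebra version) by passing to the sub-adjacent Lie algebra. Concretely, Lemma~\mref{lem:rblie}\mref{it:rbliea} tells us that from the Rota-Baxter pre-Lie algebra $(L,\ast,P)$ of weight $\lambda$ we obtain a Rota-Baxter Lie algebra $(L,[\,,\,]_{\ast},P)$ of the same weight, where $[x,y]_{\ast}=x\ast y-y\ast x$. The hypothesis $f(x\ast y-y\ast x)=0$ is exactly $f([x,y]_{\ast})=0$, so $f$ satisfies the condition required in Theorem~\mref{thm:rb3lief} relative to this Lie algebra.

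Next, I would observe that the ternary bracket $[x,y,z]_f$ defined in Eq.~(\mref{eq:braf}) coincides with the bracket produced by the Lie-algebra construction in Eq.~(\mref{eq:f3}) applied to $(L,[\,,\,]_{\ast})$: indeed,
\begin{eqnarray*}
f(x)[y,z]_{\ast}+f(y)[z,x]_{\ast}+f(z)[x,y]_{\ast}
&=& f(x)(y\ast z-z\ast y)+f(y)(z\ast x-x\ast z)\\
&&+f(z)(x\ast y-y\ast x),
\end{eqnarray*}
which is $[x,y,z]_f$. Therefore the $3$-Lie algebra $(L,[\,,\,,\,]_f)$ of Lemma~\mref{lem:bbw} obtained from $(L,[\,,\,]_{\ast})$ and $f$ is exactly the $3$-Lie algebra appearing in the statement.

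Having set up this identification, I would simply invoke Theorem~\mref{thm:rb3lief} to conclude that $P$ is a Rota-Baxter operator on $(L,[\,,\,,\,]_f)$ if and only if
\[
f(x)[P(y),P(z)]_{\ast}+f(y)[P(z),P(x)]_{\ast}+f(z)[P(x),P(y)]_{\ast}\in\Ker(P+\lambda\,Id_L),
\]
which upon expanding the commutators $[P(u),P(v)]_{\ast}=P(u)\ast P(v)-P(v)\ast P(u)$ is precisely the condition (\mref{eq:fprelie}). There is essentially no obstacle: the whole argument is a transfer of structure, and the only small point to verify carefully is the bracket identification, after which the conclusion is immediate from the previously proved Lie-algebra version.
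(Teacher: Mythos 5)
Your proposal is correct and follows essentially the same route as the paper's own proof: both pass to the sub-adjacent Lie algebra $(L,[\,,\,]_{\ast},P)$ via Lemma~\mref{lem:rblie}\mref{it:rbliea}, identify $[\,,\,,\,]_f$ with the bracket of Lemma~\mref{lem:bbw}, and then apply Theorem~\mref{thm:rb3lief}. You simply spell out the bracket identification more explicitly than the paper does, which is fine.
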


\begin{proof} By Lemma~\mref{lem:bbw}, for $f\in L^*$ satisfying $f(x\ast
y-y\ast x)=0$, $(L, [ , , ]_f)$ is a $3$-Lie algebra with the
multiplication in Eq.~(\mref{eq:braf}). By
Lemma~\mref{lem:rblie}.\mref{it:rbliea} and
Theorem~\mref{thm:rb3lief}, $P$ is a Rota-Baxter operator on the
$3$-Lie algebra $(L, [ , , ]_{f})$ if and only if $P$ satisfies
Eq.~(\mref{eq:fprelie}).
\end{proof}

\begin{theorem} \mlabel{thm:3lierb}
Let $(L, \ast, P)$ be a Rota-Baxter pre-Lie algebra of weight zero, $f\in L^*$ satisfy
\begin{equation}f(P(x)\ast y-y\ast P(x))=f(P(y)\ast x-x\ast P(y)) \quad \text{ for all } x, y\in L.
\mlabel{eq:3lierb1}
\end{equation}
Then $(L, [ ,  , ])$ is a $3$-Lie algebra with the multiplication
\begin{eqnarray}[x, y, z]&:=&(f(x)P(y)-f(y)P(x))\ast z-z\ast(f(x)P(y)-f(y)P(x))\notag\\
&&+(f(z)P(x)-f(x)P(z))\ast y-y\ast (f(z)P(x)-f(x)P(z)) \mlabel{eq:3lierb2}\\
&& +(f(y)P(z)-f(z)P(y))\ast x-x\ast
(f(y)P(z)-f(z)P(y)). \notag
\end{eqnarray}
Further, $P$ is a Rota-Baxter operator of weight zero on the $3$-Lie algebra $(L, [ , ,])$ if and only if $P$ satisfies
\begin{eqnarray}
&&f(x)(P^2(y)\ast P^2(z)-P^2(z)\ast
P^2(y))+f(y)(P^2(z)\ast P^2(x)-P^2(x)\ast P^2(z))
\mlabel{eq:3lierb3} \\
&&+f(z)(P^2(x)\ast P^2(y)-P^2(y)\ast P^2(x))=0 \quad \text{ for all } x, y, z\in L. \notag
\end{eqnarray}
\end{theorem}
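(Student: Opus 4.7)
The plan is to reduce this theorem to Theorem~\ref{thm:frb3lie} by replacing the pre-Lie product $\ast$ with the auxiliary Rota-Baxter pre-Lie product of weight zero produced by Lemma~\ref{lem:rblie}\,(b). Concretely, taking $\lambda=0$ there, the operation
$x\cdot y := P(x)\ast y - y\ast P(x)$
makes $(L,\cdot,P)$ a Rota-Baxter pre-Lie algebra of weight zero. A direct expansion gives $x\cdot y - y\cdot x = (P(x)\ast y - y\ast P(x)) - (P(y)\ast x - x\ast P(y))$, so hypothesis (\ref{eq:3lierb1}) is exactly the condition $f(x\cdot y - y\cdot x)=0$ required by Theorem~\ref{thm:frb3lie}.

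The first conclusion then follows at once: Theorem~\ref{thm:frb3lie} applied to $(L,\cdot,P)$ with this $f$ yields a 3-Lie algebra structure
$[x,y,z]^\cdot_f := f(x)(y\cdot z - z\cdot y) + f(y)(z\cdot x - x\cdot z) + f(z)(x\cdot y - y\cdot x)$.
Expanding each $u\cdot v - v\cdot u$ in terms of $\ast$ and regrouping the coefficients of $f(x)$, $f(y)$, $f(z)$ shows that this bracket coincides termwise with (\ref{eq:3lierb2}), so $(L,[\,,\,,\,])$ is a 3-Lie algebra.

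For the Rota-Baxter condition, Theorem~\ref{thm:frb3lie} at weight zero says that $P$ is a Rota-Baxter operator on $(L,[\,,\,,\,])$ if and only if the cyclic sum
$S := f(x)\bigl(P(y)\cdot P(z)-P(z)\cdot P(y)\bigr) + f(y)\bigl(P(z)\cdot P(x)-P(x)\cdot P(z)\bigr) + f(z)\bigl(P(x)\cdot P(y)-P(y)\cdot P(x)\bigr)$
lies in $\ker P$. Unfolding $\cdot$ gives
$P(y)\cdot P(z) - P(z)\cdot P(y) = P^2(y)\ast P(z) - P(z)\ast P^2(y) - P^2(z)\ast P(y) + P(y)\ast P^2(z)$,
and one application of the weight-zero Rota-Baxter identity for $\ast$ to $P^2(y)\ast P^2(z) = P(P(y))\ast P(P(z))$ and to $P^2(z)\ast P^2(y)$ yields the key identity
$P^2(y)\ast P^2(z) - P^2(z)\ast P^2(y) = P\bigl(P(y)\cdot P(z) - P(z)\cdot P(y)\bigr)$.
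Summing cyclically shows that $P(S)$ equals the left-hand side of (\ref{eq:3lierb3}), so $S \in \ker P$ is precisely (\ref{eq:3lierb3}).

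The work is mainly bookkeeping. The two places where care is needed are verifying that (\ref{eq:3lierb2}) truly matches the bracket $[\,,\,,\,]^\cdot_f$ produced by Theorem~\ref{thm:frb3lie}, and checking the identity $P^2(y)\ast P^2(z) - P^2(z)\ast P^2(y) = P\bigl(P(y)\cdot P(z) - P(z)\cdot P(y)\bigr)$ without sign errors; both are mechanical expansions, after which the theorem reduces to the $\lambda=0$ case of Theorem~\ref{thm:frb3lie}.
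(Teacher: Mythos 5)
Your proposal is correct and follows essentially the same route as the paper: pass to the derived pre-Lie product $x\circ y=P(x)\ast y-y\ast P(x)$ via Lemma~\ref{lem:rblie}, observe that Eq.~(\ref{eq:3lierb1}) is exactly $f(x\circ y-y\circ x)=0$, obtain the $3$-Lie bracket from Lemma~\ref{lem:bbw}/Theorem~\ref{thm:frb3lie}, and convert the kernel condition into Eq.~(\ref{eq:3lierb3}) by applying the weight-zero Rota-Baxter identity to $P^2(y)\ast P^2(z)-P^2(z)\ast P^2(y)$. The key identity $P^2(y)\ast P^2(z)-P^2(z)\ast P^2(y)=P\bigl(P(y)\circ P(z)-P(z)\circ P(y)\bigr)$ you isolate is precisely the computation in the paper's proof.
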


\begin{proof} By Lemma~\mref{lem:rblie}.\mref{it:rblieb}, $(L, \circ)$ is a pre-Lie algebra with the
multiplication
$$\circ: L\otimes L\rightarrow L,  x\circ y= P(x)\ast
y-y\ast P(x) \text{ for all }  x, y\in L,$$ and $P$ is a Rota-Baxter operator on the pre-Lie algebra $(L, \circ)$.

If $f\in L^*$ satisfies $f(x\circ y-y\circ x)=0$, that is, $f$
satisfies Eq. (\mref{eq:3lierb1}), then by Lemma~\mref{lem:bbw}, $(L, [ , , ])$ is a $3$-Lie
algebra, where
\begin{eqnarray*}
[x, y, z]&=&f(x)(y\circ z-z\circ y)+f(y)(z\circ x-x \circ z)+f(z)(x\circ y-y\circ
x)\\
&=&f(x)(P(y)\ast z-z\ast P(y)-P(z)\ast y+y\ast P(z))\\
&&+f(y)(P(z)\ast x-x\ast
 P(z)-P(x)\ast z+z\ast P(x))\\
&&+f(z)(P(x)\ast y-y\ast P(x)-P(y)\ast x+x\ast P(y))\\
&=&(f(x)P(y)-f(y)P(x))\ast z-z\ast(f(x)P(y)-f(y)P(x))\\
&&+(f(z)P(x)-f(x)P(z))\ast y-y\ast (f(z)P(x)-f(x)P(z))\\
&&+(f(y)P(z)-f(z)P(y))\ast x-x\ast
(f(y)P(z)-f(z)P(y)).
\end{eqnarray*}
Therefore, Eq.~(\mref{eq:3lierb2}) holds.

By  Theorem~\mref{thm:frb3lie},  $P$ is a Rota-Baxter operator on the
$3$-Lie algebra $(L, [ , , ])$ if and only if $P$ satisfies

\begin{eqnarray*}
0&=&P(f(x)(P(y)\circ P(z)-P(z)\circ P(y))+f(y)(P(z)\circ P(x)-P(x)\circ P(z))\\
&&+f(z)(P(x)\circ P(y)-P(y)\circ P(x)))\\
&=&f(x)P(P^2(y)\ast P(z)-P(z)\ast P^2(y)-P^2(z)\ast P(y)+P(y)\ast P^2(z))\\
&&+f(y)P(P^2(z)\ast P(x)-P(x)\ast P^2(z)-P^2(x)\ast P(z)+P(z)\ast P^2(x))\\
&&+f(z)P(P^2(x)\ast P(y)-P(y)\ast P^2(x)-P^2(y)\ast P(x)+P(x)\ast P^2(y))\\
&=&f(x)(P^2(y)\ast P^2(z)-P^2(z)\ast P^2(y)+ f(y)(P^2(z)\ast P^2(x)-P^2(x)\ast P^2(z))\\
&&+f(z)(P^2(x)\ast P^2(y)-P^2(y)\ast P^2(x).
\end{eqnarray*}
This proves the second statement.
 \end{proof}

\section{Rota-Baxter $3$-Lie algebras from Rota-Baxter commutative associative algebras}
\mlabel{sec:com3lie}

Let $(A, \cdot)$ be a commutative
associative algebra, $ D$ in $\Der A$,  $f$ in $A^*$ satisfying
$$f(D(x)y)=f(xD(y)).$$ Then by Lemma~\mref{lem:bbw} and Lemma~\mref{lem:rblie}.\mref{it:rblied}, $(A, [ , , ]_{f, D})$  is a
$3$-Lie algebra, where

\begin{eqnarray}
\lefteqn{ [x, y, z]_{f,D}:=\begin{vmatrix}
f(x) & f(y) & f(z)  \\
D(x) & D(y) & D(z)  \\
x    & y    &  z    \\
\end{vmatrix}}
\label{eq:fdm}
\\
&=&f(x)(D(y)\cdot z-D(z)\cdot y)+f(y)(D(z)\cdot x-D(x)\cdot
z)+f(z)(D(x)\cdot y-D(y)\cdot x) \notag \\
&=&D(f(x)y-f(y)x)\cdot z+D(f(z)x-f(x)z)\cdot y+D(f(y)z-f(z)y)\cdot x
\notag
\end{eqnarray}
for $x, y, z\in A$.

\begin{theorem} \mlabel{thm:fdmk}
 Let $(A, \cdot, P)$ be a commutative
associative Rota-Baxter algebra of weight $\lambda$, $ D\in \Der A$ satisfying
$PD=DP$ and $f\in A^*$ satisfying $f(D(x)y)=f(xD(y))$. Then $P$ is a Rota-Baxter operator of weight $\lambda$ on the $3$-Lie algebra $(A, [ , , ]_{f, D})$ if and only if $P$ satisfies
\begin{equation}\begin{vmatrix}
f(x) & f(y) & f(z)  \\
DP(x) & DP(y) & DP(z)  \\
P(x)    & P(y)    & P(z)    \\
\end{vmatrix}\in \Ker  (P+\lambda Id_L), \ \text{ for all } x, y, z\in A.
\mlabel{eq:fdmk}
\end{equation}
\end{theorem}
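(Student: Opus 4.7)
The plan is to reduce this theorem to Theorem~\ref{thm:frb3lie} by transporting the commutative associative data through the construction of Lemma~\ref{lem:rblie}.\ref{it:rblied}. First I would set $x\ast y := D(x)\cdot y$ and recall from that lemma that $(A,\ast,P)$ is then a Rota-Baxter pre-Lie algebra of weight $\lambda$, where commutativity of $\cdot$ together with $PD=DP$ is exactly what is needed. Next I would check that the hypothesis $f(D(x)y)=f(xD(y))$ rewrites as $f(x\ast y)=f(y\ast x)$, i.e.\ $f(x\ast y-y\ast x)=0$, so $f$ is admissible in the sense of Theorem~\ref{thm:frb3lie}.

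The second step is to identify the two $3$-brackets. Using commutativity of $\cdot$, the pre-Lie bracket from Eq.~(\ref{eq:braf}) gives
\[
 [x,y,z]_f = f(x)(D(y)\cdot z - D(z)\cdot y) + f(y)(D(z)\cdot x - D(x)\cdot z) + f(z)(D(x)\cdot y - D(y)\cdot x),
\]
which is exactly $[x,y,z]_{f,D}$ as defined by the determinant in Eq.~(\ref{eq:fdm}). Hence the $3$-Lie algebra $(A,[\,,\,,\,]_{f,D})$ of the theorem coincides with the $3$-Lie algebra $(A,[\,,\,,\,]_f)$ produced by the pre-Lie construction.

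Now I would invoke Theorem~\ref{thm:frb3lie} directly: $P$ is a Rota-Baxter operator of weight $\lambda$ on this $3$-Lie algebra iff
\[
 f(x)\bigl(P(y)\ast P(z)-P(z)\ast P(y)\bigr) + f(y)\bigl(P(z)\ast P(x)-P(x)\ast P(z)\bigr) + f(z)\bigl(P(x)\ast P(y)-P(y)\ast P(x)\bigr) \in \Ker(P+\lambda\,Id_L).
\]
The final step is to rewrite each factor using $P(y)\ast P(z) = D(P(y))\cdot P(z) = DP(y)\cdot P(z)$, where the last equality uses $DP=PD$. After this substitution the displayed expression becomes precisely the expansion along the first row of
\[
 \begin{vmatrix} f(x) & f(y) & f(z) \\ DP(x) & DP(y) & DP(z) \\ P(x) & P(y) & P(z) \end{vmatrix},
\]
yielding Eq.~(\ref{eq:fdmk}). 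There is no real obstacle here; the only care required is the bookkeeping of signs in the translation between the pre-Lie bracket $x\ast y-y\ast x$ and the determinantal form, and confirming that the commutativity of $\cdot$ is what lets the two bracket expressions in Eq.~(\ref{eq:fdm}) (the expanded form and the $D$-inside form) agree so that the reduction to Theorem~\ref{thm:frb3lie} is legal.
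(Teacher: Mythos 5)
Your proposal is correct and follows essentially the same route as the paper, which simply cites Theorem~\ref{thm:rb3lief} together with Lemma~\ref{lem:rblie}.\ref{it:rblied}; your detour through Theorem~\ref{thm:frb3lie} is the same reduction, since that theorem is itself just the combination of those two results, and your identification of the determinant in Eq.~(\ref{eq:fdmk}) with the first-row expansion of $f(x)[P(y),P(z)]_\ast+f(y)[P(z),P(x)]_\ast+f(z)[P(x),P(y)]_\ast$ is exactly the point the paper leaves implicit. One tiny correction: the equality $P(y)\ast P(z)=DP(y)\cdot P(z)$ is just the definition of $\ast$ (no commutation needed); the hypothesis $PD=DP$ is used instead to make Lemma~\ref{lem:rblie}.\ref{it:rblied} applicable, i.e.\ to ensure $P$ is a Rota-Baxter operator for the pre-Lie product.
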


\begin{proof} The result follows directly from Theorem~\mref{thm:rb3lief} and Lemma~\mref{lem:rblie}.\mref{it:rblied}.
\end{proof}

We prove a lemma before giving our next results on Rota-Baxter $3$-Lie algebras.

\begin{lemma}
Let $A$ be a commutative algebra. For a $3\times 3$-matrix $M$, we use the notation
$$M:=\left[\begin{matrix} x_1 & y_1 & z_1 \\ x_2 & y_2 & z_2\\ x_3 & y_3 & z_3 \end{matrix}\right]=\left[\begin{matrix}\vec{x} & \vec{y} &\vec{z}\end{matrix}\right]$$
and the corresponding determinant, where $\vec{x}, \vec{y}$ and $\vec{z}$ denote the column vectors. Let $P:A\to A$ be a Rota-Baxter operator of weight $\lambda$ and let $P(\vec{x}), P(\vec{y})$ and $P(\vec{z})$ denote the images of the column vectors. Then we have
\begin{equation}
\begin{vmatrix} P(\vec{x}) & P(\vec y) & P(\vec z)\end{vmatrix}
= P\left(\sum_{\emptyset\neq I\subseteq [3]} \lambda^{|I|-1}
\begin{vmatrix} \hat{P}(\vec{x}) & \hat{P}(\vec y) & \hat{P}(\vec z)\end{vmatrix}\right).
\mlabel{eq:rbmat}
\end{equation}
\mlabel{lem:rbmat}
\end{lemma}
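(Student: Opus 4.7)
The plan is to expand the $3 \times 3$ determinant by the Leibniz formula and apply Proposition~\mref{pp:asn} term by term. Writing $\vec{x} = (x_1,x_2,x_3)^T$, and similarly for $\vec{y}, \vec{z}$, the Leibniz expansion reads
\begin{equation*}
\begin{vmatrix} P(\vec{x}) & P(\vec{y}) & P(\vec{z}) \end{vmatrix} = \sum_{\sigma \in S_3} \mathrm{sgn}(\sigma)\, P(x_{\sigma(1)}) \cdot P(y_{\sigma(2)}) \cdot P(z_{\sigma(3)}).
\end{equation*}
The key observation is that in every such monomial the factor coming from the column $\vec{x}$ occupies position $1$, the factor from $\vec{y}$ occupies position $2$, and the factor from $\vec{z}$ occupies position $3$, so the positional index appearing in the Rota-Baxter identity below will automatically match the column index used in the statement of the lemma.

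First I would invoke Proposition~\mref{pp:asn} with $n=3$ to view $(A,\cdot,P)$ as a Rota-Baxter $3$-algebra of weight $\lambda$ with respect to the ternary product $\langle a,b,c\rangle := a \cdot b \cdot c$. Applying Eq.~(\mref{eq:rbn}) to each $\sigma$-summand gives
\begin{equation*}
P(x_{\sigma(1)}) P(y_{\sigma(2)}) P(z_{\sigma(3)}) = P\!\left( \sum_{\emptyset \neq I \subseteq [3]} \lambda^{|I|-1}\, \hat{P}_I(x_{\sigma(1)})\, \hat{P}_I(y_{\sigma(2)})\, \hat{P}_I(z_{\sigma(3)}) \right),
\end{equation*}
where $\hat{P}_I$ fixes the $i$-th factor of the triple when $i\in I$ and applies $P$ to it when $i\notin I$.

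Next I would use the linearity of $P$ to pull it out of the $\sigma$-sum and interchange the order of summation so that $I$ runs outermost. For each fixed nonempty $I \subseteq [3]$, the inner sum
\begin{equation*}
\sum_{\sigma \in S_3} \mathrm{sgn}(\sigma)\, \hat{P}_I(x_{\sigma(1)})\, \hat{P}_I(y_{\sigma(2)})\, \hat{P}_I(z_{\sigma(3)})
\end{equation*}
is, by the key observation above, precisely the Leibniz expansion of $\begin{vmatrix} \hat{P}(\vec{x}) & \hat{P}(\vec{y}) & \hat{P}(\vec{z})\end{vmatrix}$, the determinant obtained from the original matrix by leaving column $i$ intact when $i\in I$ and replacing it by $P$ applied entry-wise otherwise. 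Assembling the pieces yields the right-hand side of Eq.~(\mref{eq:rbmat}).

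The only real obstacle is notational bookkeeping: one must verify that the positional $\hat{P}_I$ from Proposition~\mref{pp:asn}, which acts on the $i$-th factor of an ordered ternary product, coincides under the Leibniz expansion with the columnar $\hat{P}_I$ used in the statement of the lemma. Because the assignment \emph{position $i \leftrightarrow$ column $i$} is independent of $\sigma$, this identification poses no difficulty and the proof reduces to a clean manipulation of sums, using only Proposition~\mref{pp:asn} and the linearity of $P$.
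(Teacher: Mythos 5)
Your proposal is correct and follows essentially the same route as the paper: expand the determinant by the Leibniz formula, apply Eq.~(\mref{eq:rbn}) (via Proposition~\mref{pp:asn} with $n=3$) to each monomial $P(x_{\sigma(1)})P(y_{\sigma(2)})P(z_{\sigma(3)})$, then use linearity of $P$ to interchange the sums over $\sigma$ and $I$ and reassemble each inner sum into the determinant $\begin{vmatrix} \hat{P}(\vec{x}) & \hat{P}(\vec y) & \hat{P}(\vec z)\end{vmatrix}$. Your explicit remark that the positional index of $\hat{P}_I$ matches the column index independently of $\sigma$ is the one bookkeeping point the paper leaves implicit, and you handle it correctly.
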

\begin{proof}
By the definition of determinants and Proposition~\mref{pp:asn}, we have
\begin{eqnarray*}
\begin{vmatrix} P(\vec x) &P(\vec y) &P(\vec z) \end{vmatrix} &=& \sum_{\sigma\in S_3} \mathrm{sgn} (\sigma) P(x_{\sigma(1)}) P(y_{\sigma(2)})P(z_{\sigma(3)}) \\
&=& \sum_{\sigma \in S_3} \mathrm{sgn}(\sigma) P\left(\sum_{\emptyset \neq I\subseteq [3]} \lambda^{|I|-1} \hat{P}(x_{\sigma(1)}) \hat{P}(y_{\sigma(2)})\hat{P}(z_{\sigma(3)})\right) \\
&=& P\left(\sum_{\emptyset \neq I\subseteq [3]} \lambda^{|I|-1} \begin{vmatrix} \hat{P}(\vec x) & \hat{P}(\vec y) &\hat{P}(\vec z) \end{vmatrix}\right),
\end{eqnarray*}
as needed.
\end{proof}

Let $A$ be a commutative associative  algebra, $D_1, D_2,
D_3$ be  derivations on $(A, \cdot)$ satisfying $D_1D_2=D_2D_1$ for
$i, j=1, 2$. Then by \cite{Pozhi} $A$ is  a $3$-Lie algebra with the multiplication
\begin{equation} [x_1, x_2, x_3]:=\begin{vmatrix}
x_1    & x_2    &  x_3    \\
D_1(x_1) & D_1(x_2) & D_1(x_3)  \\
D_2(x_1) & D_2(x_2) & D_2(x_3)  \\
\end{vmatrix}=\begin{vmatrix} \vec x & D_1(\vec x) &D_2(\vec x)\end{vmatrix} ~~ \text{ for all } \vec x:=[x_1, x_2, x_3]^T\in A^3.
\mlabel{eq:d12m}
\end{equation}

\begin{theorem} \mlabel{thm:pd12}
Let $(A, P)$ be a  Rota-Baxter
commutative associative  algebra of weight $\lambda$,  $D_1, D_2$ be derivations on $(A, \cdot)$ satisfying $D_1D_2=D_2D_1$, $PD_i=D_iP$
for $i=1, 2$. Then $P$ is a Rota-Baxter operator of weight
$\lambda$ on the $3$-Lie algebra $(A, [ , , ])$, where $[ , , ]$ is defined by Eq.~$($\mref{eq:d12m}$)$.
\end{theorem}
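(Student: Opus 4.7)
The plan is to reduce the claim directly to Lemma~\ref{lem:rbmat}, after re-expressing the determinantal bracket in~\eqref{eq:d12m} so that its three arguments correspond to the columns rather than the rows of the defining matrix. Writing
\[
\vec{u}_i := (x_i,\ D_1 x_i,\ D_2 x_i)^T \in A^3
\]
and using $\det M = \det M^T$, the bracket becomes $[x_1,x_2,x_3] = \det(\vec{u}_1, \vec{u}_2, \vec{u}_3)$. The commutation hypothesis $PD_j = D_jP$ then yields $P(\vec{u}_i) = (P(x_i),\, D_1P(x_i),\, D_2P(x_i))^T$, so that
\[
[P(x_1), P(x_2), P(x_3)] = \det\bigl(P(\vec{u}_1),\, P(\vec{u}_2),\, P(\vec{u}_3)\bigr).
\]

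Next I would apply Lemma~\ref{lem:rbmat} to these three column vectors, obtaining
\[
[P(x_1),P(x_2),P(x_3)] = P\!\left(\sum_{\emptyset\neq I\subseteq [3]} \lambda^{|I|-1} \det\bigl(\hat P_I(\vec u_1),\, \hat P_I(\vec u_2),\, \hat P_I(\vec u_3)\bigr)\right),
\]
and then check that each summand equals the desired bracket $[\hat P_I(x_1), \hat P_I(x_2), \hat P_I(x_3)]$. Indeed, whether $i\in I$ or $i\notin I$, the commutation $PD_j = D_jP$ gives
\[
\hat P_I(\vec{u}_i) = \bigl(\hat P_I(x_i),\ D_1\hat P_I(x_i),\ D_2 \hat P_I(x_i)\bigr)^T,
\]
which is precisely the $i$-th column of the matrix defining $[\hat P_I(x_1), \hat P_I(x_2), \hat P_I(x_3)]$ via~\eqref{eq:d12m}. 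This yields the Rota-Baxter identity~\eqref{eq:rb3de} for the $3$-Lie bracket.

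The only subtle point I anticipate is the matching between the two possible roles of the index set $[3]$ in Lemma~\ref{lem:rbmat}: as labelling columns of the $3\times 3$ matrix, or as labelling the arguments of the $3$-Lie bracket. Passing to the transpose aligns these two indexings, since then each argument $x_i$ corresponds bijectively to the column $\vec{u}_i$. The commutation $PD_j = D_jP$ is what makes the entries $D_jP(x_i)$ and $P(D_jx_i)$ interchangeable, and thereby enables the identification of $\hat P_I(\vec{u}_i)$ with the column $(\hat P_I(x_i),\, D_1\hat P_I(x_i),\, D_2\hat P_I(x_i))^T$ in the last step.
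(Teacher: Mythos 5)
Your proposal is correct and follows essentially the same route as the paper: both reduce the statement to Lemma~\ref{lem:rbmat} by using $PD_i=D_iP$ to move $P$ inside the derivations and then identifying each summand with $[\hat{P}_I(x_1),\hat{P}_I(x_2),\hat{P}_I(x_3)]$. If anything, your explicit transposition, which makes the index set $I$ of the lemma label the arguments $x_1,x_2,x_3$ rather than the columns $\vec{x}, D_1(\vec{x}), D_2(\vec{x})$, handles the column-versus-argument bookkeeping more carefully than the paper, which applies the lemma to the latter columns and passes to the bracket $[\hat{P}(x_1),\hat{P}(x_2),\hat{P}(x_3)]$ without commenting on the reindexing (which is harmless only after summing over all $I$ of fixed cardinality, by commutativity of $A$).
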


\begin{proof} Let $x_1, x_2, x_3\in A$. Since $PD_1=D_1P$ and  $PD_2=D_2P,$ by Lemma~\mref{lem:rbmat} and Eq.~(\mref{eq:d12m}) we have
\begin{eqnarray*}
[P(x_1),P(x_2),P(x_3)]&=& \begin{vmatrix} P(\vec x)&D_1(P(\vec x))& D_2(P(\vec x))\end{vmatrix} \\
&=& \begin{vmatrix} P(\vec x) & P(D_1(\vec x)) &P(D_2(\vec x))\end{vmatrix} \\
&=& P\left(\sum_{\emptyset \neq I\subseteq [3]} \lambda^{|I|-1} \begin{vmatrix} \hat{P}(\vec x) & \hat{P}(D_1(\vec x)) &\hat{P}(D_2(\vec x))\end{vmatrix}\right)\\
&=& P\left(\sum_{\emptyset \neq I\subseteq [3]} \lambda^{|I|-1}
\begin{vmatrix} \hat{P}(\vec x) & D_1 (\hat{P}(\vec x)) & D_2(\hat{P}(\vec x))\end{vmatrix} \right)\\
&=& P\left(\sum_{\emptyset \neq I\subseteq [3]} \lambda^{|I|-1} [\hat{P}(x_1), \hat{P}(x_2), \hat{P}(x_3)]\right).
\end{eqnarray*}
This is what we need.
\end{proof}

Let $A$ be a commutative associative  algebra, $D_1, D_2,
D_3$ be  derivations on $A$ satisfying $D_iD_j=D_jD_i$ for
$i\neq j$, $i, j=1, 2, 3$. Then by \cite{F} $A$ is a $3$-Lie algebra with the multiplication
\begin{equation} [x, y, z]_D:=\begin{vmatrix}
D_1(x_1) & D_1(x_2) & D_1(x_3)  \\
D_2(x_1) & D_2(x_2) & D_2(x_3)  \\
D_3(x_1)& D_3(x_2) & D_3(x_3)    \\
\end{vmatrix} =\begin{vmatrix} D_1(\vec x) & D_2(\vec x)&D_3(\vec x)\end{vmatrix}, ~~ \text{ for all }\vec x=[x_1, x_2, x_3]^T\in A^3.
\mlabel{eq:d123m}
\end{equation}

\begin{theorem} Let $(A, P)$ be a  Rota-Baxter
commutative associative  algebra of weight $\lambda$,  $D_1, D_2,
D_3$ be derivations of $(A, \cdot)$ satisfying $D_iD_j=D_jD_i$ and
$PD_i=D_iP$   for $i, j=1, 2, 3$, $i\neq j$.  Then $P$ is a Rota-Baxter operator of weight $\lambda$ on the $3$-Lie algebra $(A, [ ,
, ]_D)$, where the multiplication $[ , , ]_D$ is defined by Eq.~$($\mref{eq:d123m}$)$.
\mlabel{thm:d123m}
\end{theorem}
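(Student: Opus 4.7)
The plan is to follow the proof of Theorem~\mref{thm:pd12} essentially verbatim, with the identity column of the determinant in Eq.~(\mref{eq:d12m}) replaced by the third derivation $D_3$ in Eq.~(\mref{eq:d123m}). The three ingredients used there---the determinant expression for the bracket, the column-wise Rota-Baxter matrix identity of Lemma~\mref{lem:rbmat}, and the commutation $PD_i=D_iP$---all continue to apply. The pairwise commutativity $D_iD_j=D_jD_i$ plays no role in the Rota-Baxter manipulation itself; it is needed only to guarantee that $(A,[\,,\,,\,]_D)$ is a $3$-Lie algebra in the first place.

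Concretely, I would fix $x_1,x_2,x_3\in A$, set $\vec x=[x_1,x_2,x_3]^T$, and use Eq.~(\mref{eq:d123m}) to write
\[
[P(x_1),P(x_2),P(x_3)]_D = \begin{vmatrix} D_1(P(\vec x)) & D_2(P(\vec x)) & D_3(P(\vec x))\end{vmatrix}.
\]
Applying $PD_i=D_iP$ column by column rewrites this as $|P(D_1(\vec x))\ P(D_2(\vec x))\ P(D_3(\vec x))|$. Now Lemma~\mref{lem:rbmat} applies directly, with the three column vectors $D_1(\vec x), D_2(\vec x), D_3(\vec x)$ in the roles of $\vec x,\vec y,\vec z$, yielding
\[
P\!\left(\sum_{\emptyset\neq I\subseteq [3]} \lambda^{|I|-1}
\begin{vmatrix} \hat{P}(D_1(\vec x)) & \hat{P}(D_2(\vec x)) & \hat{P}(D_3(\vec x)) \end{vmatrix}\right).
\]
A second use of $PD_i=D_iP$ moves $P$ back through the derivatives, turning the inner determinant into $|D_1(\hat{P}(\vec x))\ D_2(\hat{P}(\vec x))\ D_3(\hat{P}(\vec x))|$, which by Eq.~(\mref{eq:d123m}) equals $[\hat{P}(x_1),\hat{P}(x_2),\hat{P}(x_3)]_D$. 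This is precisely the Rota-Baxter identity of Definition~\mref{de:nlie}(c) for $P$ on $(A,[\,,\,,\,]_D)$.

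I expect no real obstacle, since the only notational point requiring attention---that the subset $I$ in Lemma~\mref{lem:rbmat} indexes a column-wise operation while Definition~\mref{de:nlie}(c) indexes an input---is exactly the one already handled in the proof of Theorem~\mref{thm:pd12}. In the commutative associative algebra $A$, the $3\times 3$ determinant is multilinear alternating in its columns, which is what allows the two indexings to yield the same sum and makes the final identification of the inner determinant with $[\hat{P}(x_1),\hat{P}(x_2),\hat{P}(x_3)]_D$ valid.
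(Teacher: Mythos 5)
Your proof is correct and is essentially the paper's own argument: the published proof of this theorem consists of the single sentence that it follows the same argument as Theorem~\ref{thm:pd12}, and that argument is exactly what you carry out, with the identity column replaced by $D_3(\vec x)$. Your explicit remark on reconciling the column-indexed $\hat P$ of Lemma~\ref{lem:rbmat} with the input-indexed $\hat P$ of Definition~\ref{de:nlie} (which match after summing over $I$, via the permutation expansion of the determinant) is, if anything, more careful than the paper, which passes between the two indexings silently.
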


\begin{proof}
The proof follows the same argument as the proof for Theorem~\mref{thm:pd12}.
\end{proof}

\section{Inheritance properties of Rota-Baxter $3$-Lie algebras}
\mlabel{sec:rb3inh}
In this section, we study the inheritance properties of Rota-Baxter $3$-Lie algebras. Such a property of Rota-Baxter Lie algebras plays an important role in their theoretical study and applications such as in integrable systems~\mcite{Bai,BGN,STS}. This is presented in Section~\mref{ss:rb3lie}. We also establish a similar property for Rota-Baxter Lie triple systems in Section~\mref{ss:trip}.

\subsection{Rota-Baxter $3$-Lie algebras constructed by  Rota-Baxter  $3$-Lie algebras}
\mlabel{ss:rb3lie}

Let $(L, [ , , ], P)$ be a Rota-Baxter $3$-Lie algebra of weight
$\lambda$. Using the notation in Eq.~(\mref{eq:rbn}), we define a ternary operation on $L$ by

\begin{eqnarray}
\lefteqn{[ x_1,x_2,x_3 ]_{P}=\sum_{\emptyset\neq I\subseteq [3]} \lambda^{|I|-1}[\hat{P}_I(x_1),\hat{P}_I(x_2),\hat{P}_I(x_3)]}\notag\\
&=& [P(x), P(y), z]+[P(x), y, P(z)]+[x,P(y), P(z)]\mlabel{eq:pderiv}\\
&&+\lambda[P(x), y, z]+\lambda[x, P(y), z]+\lambda[x, y, P(z)]+\lambda^2[x, y, z] \text{ for all } x, y, z\in L.
\notag
\end{eqnarray}

Then we have the following result.

\begin{theorem} Let $(L, [ , , ], P)$ be a Rota-Baxter
$3$-Lie algebra of weight $\lambda$. Then with $[\,,\,,\,]$ in Eq.~$($\mref{eq:pderiv}$)$, $( L, [ , ,
]_{P}, P)$ is a Rota-Baxter $3$-Lie algebra of weight
$\lambda$.
\mlabel{thm:pderiv}
\end{theorem}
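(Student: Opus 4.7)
The plan is to verify that $[\,,\,,\,]_P$ makes $L$ into a $3$-Lie algebra and that $P$ is a Rota--Baxter operator of weight $\lambda$ for it. Trilinearity of $[\,,\,,\,]_P$ is immediate from the linearity of $P$ and the trilinearity of $[\,,\,,\,]$. For skew-symmetry, a reindexing argument works: for a transposition $\sigma \in S_3$ and nonempty $I \subseteq \{1,2,3\}$, the $I$-summand in $[x_{\sigma(1)}, x_{\sigma(2)}, x_{\sigma(3)}]_P$ equals $\mathrm{sgn}(\sigma)$ times the $\sigma(I)$-summand in $[x_1, x_2, x_3]_P$ (by the skew-symmetry of $[\,,\,,\,]$); since $|\sigma(I)| = |I|$, the weights $\lambda^{|I|-1}$ agree and the sum reindexes to give $[x_{\sigma(1)}, x_{\sigma(2)}, x_{\sigma(3)}]_P = \mathrm{sgn}(\sigma)\,[x_1, x_2, x_3]_P$.

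A key preliminary observation simplifies the remaining verifications. Setting $\widetilde P := P + \lambda\,\mathrm{id}_L$, a multilinear expansion of $[\widetilde P x, \widetilde P y, \widetilde P z]$ together with separation of the $I = \emptyset$ term gives $[\widetilde P x, \widetilde P y, \widetilde P z] = [P x, P y, P z] + \lambda\, [x, y, z]_P$. Combined with the Rota--Baxter hypothesis (Eq.~(\mref{eq:3lie}), which in compact form reads $[P x, P y, P z] = P([x, y, z]_P)$), this yields the intertwining identity
\[
\widetilde P([x, y, z]_P) = [\widetilde P x, \widetilde P y, \widetilde P z].
\]
For the Rota--Baxter property of $P$ with respect to $[\,,\,,\,]_P$, I would recast the condition compactly as $\widetilde P\,[P x, P y, P z]_P = P\,[\widetilde P x, \widetilde P y, \widetilde P z]_P$ and expand each side using $\lambda\,[u, v, w]_P = [\widetilde P u, \widetilde P v, \widetilde P w] - [Pu, Pv, Pw]$. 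Applying the original Rota--Baxter relation twice --- once at $(Px, Py, Pz)$ and once at $(\widetilde P x, \widetilde P y, \widetilde P z)$ --- both sides reduce to $\widetilde P[Qx, Qy, Qz] - P[Qx, Qy, Qz]$ with $Q := P\widetilde P$, establishing the Rota--Baxter property when $\lambda \neq 0$. The case $\lambda = 0$ is handled by the analogous direct expansion without the $1/\lambda$ factor.

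For the $3$-Jacobi identity for $[\,,\,,\,]_P$, a direct expansion is called for: applying $\widetilde P$ to both sides via the intertwining identity and then invoking the $3$-Jacobi for $[\,,\,,\,]$ only shows equality modulo $\ker\widetilde P$, which in general is nontrivial. I would therefore fully expand both sides of
\[
[[x_1, x_2, x_3]_P, y_2, y_3]_P = \sum_{i=1}^3 [x_1, \ldots, [x_i, y_2, y_3]_P, \ldots, x_3]_P
\]
into sums of nested $[\,,\,,\,]$-brackets with various $P$-decorations on the five leaves $x_1, x_2, x_3, y_2, y_3$, using the Rota--Baxter relation to rewrite any subexpression $P([u, v, w]_P)$ as $[Pu, Pv, Pw]$. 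Matching the resulting sums term-by-term reduces the identity to the $3$-Jacobi for $[\,,\,,\,]$ applied at each $P$-configuration on the leaves. The \emph{main obstacle} is the substantial combinatorial bookkeeping, as the expansion produces on the order of several dozen leaf-configurations to track; the verification itself is mechanical once the accounting is set up.
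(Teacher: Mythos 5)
Your proposal is correct in substance, but its two halves fare differently against the paper. For the Rota--Baxter property of $P$ on $(L,[\,,\,,\,]_P)$ you take a genuinely different route: the intertwining identity $\widetilde P([x,y,z]_P)=[\widetilde Px,\widetilde Py,\widetilde Pz]$ with $\widetilde P=P+\lambda\,\mathrm{id}_L$, followed by the reduction of both sides to $\widetilde P[Qx,Qy,Qz]-P[Qx,Qy,Qz]$ with $Q=P\widetilde P$. This does check out (it ultimately rests on $P\widetilde P=\widetilde PP$), but it costs you a case split at $\lambda=0$. The paper's argument is shorter and uniform in $\lambda$: since $\hat{P}_I(P(x_i))=P(\hat{P}_I(x_i))$, one gets
\[
[P(x_1),P(x_2),P(x_3)]_P=\sum_{\emptyset\neq I\subseteq[3]}\lambda^{|I|-1}[P(\hat{P}_I(x_1)),P(\hat{P}_I(x_2)),P(\hat{P}_I(x_3))]
=P\Big(\sum_{\emptyset\neq I\subseteq[3]}\lambda^{|I|-1}[\hat{P}_I(x_1),\hat{P}_I(x_2),\hat{P}_I(x_3)]_P\Big),
\]
the last step being the compact form $[Pu,Pv,Pw]=P([u,v,w]_P)$ of the hypothesis applied subset by subset. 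For the $3$-Jacobi identity your strategy coincides with the paper's, and you correctly diagnose that the $\widetilde P$-shortcut only yields equality modulo $\Ker\widetilde P$; however, you stop at ``mechanical bookkeeping'' without exhibiting the organizing principle that makes the term-by-term matching actually close. That principle is the whole content of the paper's computation: after one application of $[Pu,Pv,Pw]=P([u,v,w]_P)$ to the inner bracket, each of the nested brackets in the Jacobi identity for $[\,,\,,\,]_P$ expands as $\sum_{\emptyset\neq K\subseteq[5]}\lambda^{|K|-1}$ times the corresponding nested $[\,,\,,\,]$-bracket with $P$ applied to exactly those $x_i$ with $i\notin K$ (the weights combine because $\lambda^{|I|-1}\lambda^{|J|-1}=\lambda^{|K|-1}$ when $|K|=|I|+|J|-1$), so the identity follows by invoking the Jacobi identity of $[\,,\,,\,]$ once for each of the $31$ nonempty subsets $K$. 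Supplying that indexing would turn your Jacobi half from a plausible plan into a proof; as written it is the one genuinely missing piece.
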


\begin{proof} First we prove that $( L, [ , , ]_{P})$
is a $3$-Lie algebra. It is clear that $[ , , ]_{P}$ is multi-linear
and skew-symmetric.

Let $x_1, x_2, x_3, x_4, x_5\in L$. Denote $y_1=[x_1,x_2,x_3]_P, y_2=x_4, y_3=x_5$. Then by Eqs.~(\mref{eq:nlie}), (\mref{eq:rbn}) and (\mref{eq:pderiv}), we have
\begin{eqnarray*}
[[x_1,x_2,x_3]_P,x_4,x_5]_P &=& [y_1,y_2,y_3]_P \\
&=& \sum_{\emptyset \neq I\subseteq [3]} \lambda^{|I|-1} [\hat{P}_I(y_1),\hat{P}_I(y_2),\hat{P}_I(y_3)] \\
&=& \sum_{\emptyset \neq I\subseteq [3], 1\not\in I} \lambda^{|I|-1}[P(y_1),\hat{P}_I(y_2),\hat{P}_I(y_3)] +
\sum_{\emptyset \neq I\subseteq [3], 1\in I} \lambda^{|I|-1}[y_1,\hat{P}_I(y_2),\hat{P}_I(y_3)]
\\
&=&
\sum_{\emptyset \neq I \subseteq [3], 1\not\in I}
\lambda^{|I|-1}[[P(x_1),P(x_2),P(x_3)],\hat{P}_I(y_2),\hat{P}_I(y_3)]\\
&& +
\sum_{\emptyset \neq I\subseteq [3], 1\in I}
\lambda^{|I|-1}\left [ \sum_{\emptyset \neq J\subseteq [3]} \lambda^{|J|-1} [\hat{P}_J(x_1),\hat{P}_J(x_2),\hat{P}_J(x_3)], \hat{P}_I(y_2),\hat{P}_I(y_3)\right]
 \\
&=& \sum_{\emptyset \neq K \subseteq [5], K\cap [3]=\emptyset}
\lambda^{|K|-1}[[\hat{P}_K(x_1),\hat{P}_K(x_2),\hat{P}_K(x_3)], \hat{P}_K(x_4),\hat{P}_K(x_5)]\\
&&+ \sum_{\emptyset \neq I \subseteq [5], I\cap [3]\neq \emptyset} \lambda^{|I|-1}[[\hat{P}_K(x_1),\hat{P}_K(x_2),\hat{P}_K(x_3)], \hat{P}_K(x_4),\hat{P}_K(x_5)]\\
&=&
\sum_{\emptyset \neq I \subseteq [5]}
\lambda^{|I|-1}[[\hat{P}_K(x_1),\hat{P}_K(x_2),\hat{P}_K(x_3)], \hat{P}_K(x_4),\hat{P}_K(x_5)].
\end{eqnarray*}

Since $(L, [\,,\,,\,])$ is a 3-Lie algebra, for any given $\emptyset \neq I\subseteq [5]$, we have

\begin{eqnarray*}
\lefteqn{[[\hat{P}_K(x_1),\hat{P}_K(x_2),\hat{P}_K(x_3)], \hat{P}_K(x_4),\hat{P}_K(x_5)] = [[\hat{P}_K(x_1),\hat{P}_K(x_4),\hat{P}_K(x_5)], \hat{P}_K(x_2),\hat{P}_K(x_3)]}\\
&&+
[[\hat{P}_K(x_2),\hat{P}_K(x_4),\hat{P}_K(x_5)], \hat{P}_K(x_3),\hat{P}_K(x_1)]+
[[\hat{P}_K(x_3),\hat{P}_K(x_4),\hat{P}_K(x_5)], \hat{P}_K(x_1),\hat{P}_K(x_2)].
\end{eqnarray*}
Thus from the above sum, we conclude that $(L,[\,,\,,\,]_P)$ is a
3-Lie algebra.

Further we have
\begin{eqnarray*}
[P(x_1),P(x_2),P(x_3)]_P&=& \sum_{\emptyset \neq I \subseteq [3]} \lambda^{|I|-1} [\hat{P}_I(P(x_1)),\hat{P}_I(P(x_2)),\hat{P}_I(P(x_3))]\\
&=& \sum_{\emptyset \neq I \subseteq [3]} \lambda^{|I|-1} [P(\hat{P}_I(x_1)),P(\hat{P}_I(x_2)),P(\hat{P}_I(x_3))]\\
&=& \sum_{\emptyset \neq I \subseteq [3]} \lambda^{|I|-1} P\left([\hat{P}_I(x_1),\hat{P}_I(x_2),\hat{P}_I(x_3)]_P\right).
\end{eqnarray*}
This proves that $P$ is a Rota-Baxter operator on $(L, [\,,\,,\,]_P)$.
\end{proof}

\begin{theorem}  Let $(L, [ , , ], P)$ be  a Rota-Baxter $3$-Lie algebra
of weight $\lambda$. Let $d$ be a differential operator of weight $\lambda$ on $L$ satisfying $dP=Pd$. Then $d$
is a derivation of weight $\lambda$ on the  $3$-Lie algebra $(L, [ , , ]_{P})$,
where $[ , , ]_{P}$ is defined in Eq.~$($\mref{eq:pderiv}$)$.
\mlabel{thm:pderiv2}
 \end{theorem}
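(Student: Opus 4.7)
\bigskip

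\noindent\textbf{Proof proposal.} The plan is to show the desired derivation identity by expanding both sides as double sums indexed by pairs $(I,J)$ of nonempty subsets of $[3]$, and then matching the summands termwise using only the commutation $dP=Pd$.

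First, I would compute the left-hand side. By definition of $[\,,\,,\,]_P$ from Eq.~(\ref{eq:pderiv}) and linearity of $d$, followed by applying the differential-operator axiom for $d$ of weight $\lambda$ to each summand, I obtain
\begin{equation*}
d\bigl([x_1,x_2,x_3]_P\bigr)=\sum_{\emptyset\neq I\subseteq [3]}\sum_{\emptyset\neq J\subseteq [3]}\lambda^{|I|+|J|-2}\bigl[\check{d}_J(\hat{P}_I(x_1)),\check{d}_J(\hat{P}_I(x_2)),\check{d}_J(\hat{P}_I(x_3))\bigr].
\end{equation*}
Next I would compute the expression that $d$ being a weight-$\lambda$ derivation on $(L,[\,,\,,\,]_P)$ would require, namely
\begin{equation*}
\sum_{\emptyset\neq J\subseteq [3]}\lambda^{|J|-1}\bigl[\check{d}_J(x_1),\check{d}_J(x_2),\check{d}_J(x_3)\bigr]_P=\sum_{\emptyset\neq J\subseteq [3]}\sum_{\emptyset\neq I\subseteq [3]}\lambda^{|I|+|J|-2}\bigl[\hat{P}_I(\check{d}_J(x_1)),\hat{P}_I(\check{d}_J(x_2)),\hat{P}_I(\check{d}_J(x_3))\bigr],
\end{equation*}
where I have unfolded the inner bracket $[\,,\,,\,]_P$ via Eq.~(\ref{eq:pderiv}).

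Then the key step is to observe that for each index $i\in [3]$ and each pair $(I,J)$, the operator applied to $x_i$ on the left is $\check{d}_J\hat{P}_I$ and on the right is $\hat{P}_I\check{d}_J$. A quick case analysis on the four possibilities ($i\in I$ vs.\ $i\notin I$, and $i\in J$ vs.\ $i\notin J$) shows that in three of the four cases the two composites act as $\mathrm{id}$, $d$, or $P$, and in the remaining case ($i\notin I$, $i\in J$) they act as $dP$ and $Pd$ respectively, which agree by hypothesis. Hence the summands of the two double sums coincide termwise, and the sums are equal.

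The main obstacle is purely notational bookkeeping of the double sum, not any conceptual difficulty: once the two expansions are written down, the commutation $dP=Pd$ matches the terms position by position. No use of the $3$-Jacobi identity or of the Rota-Baxter relation (\ref{eq:rbn}) is needed for this result, only multilinearity of the bracket, the definition of $[\,,\,,\,]_P$ from Theorem~\ref{thm:pderiv}, the derivation axiom Eq.~(\ref{eq:dern}), and the commutativity of $d$ with $P$.
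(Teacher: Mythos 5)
Your proposal is correct and follows essentially the same route as the paper: expand $d([x_1,x_2,x_3]_P)$ into a double sum over pairs $(I,J)$ via Eq.~(\mref{eq:pderiv}) and the weight-$\lambda$ derivation axiom, then use $dP=Pd$ coordinatewise to replace $\check{d}_J\hat{P}_I$ by $\hat{P}_I\check{d}_J$ and re-fold the inner sum into $[\check{d}_J(x_1),\check{d}_J(x_2),\check{d}_J(x_3)]_P$. Your explicit four-case check of the composites is exactly the justification the paper leaves implicit, and you correctly note that neither the $3$-Jacobi identity nor the Rota-Baxter relation is needed.
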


\begin{proof}
Let $x_1, x_2, x_3\in L$. Using the notation in
Eqs.~(\mref{eq:dern}) and (\mref{eq:rbn}), we have

\begin{eqnarray*}
d([x_1,x_2,x_3]_P)&=& \sum_{\emptyset\neq I\subseteq [3]} \lambda^{|I|-1} d [\hat{P}_I(x_1),\hat{P}_I(x_2),\hat{P}_I(x_3)]\\
&=& \sum_{\emptyset\neq I\subseteq [3]} \lambda^{|I|-1} \left(\sum_{\emptyset\neq J\subseteq [3]} \lambda^{|J|-1} [\check{d}_J\hat{P}_I(x_1),\check{d}_J\hat{P}_I(x_2) ,\check{d}_J\hat{P}(x_3)]\right)\\
&=& \sum_{\emptyset\neq J\subseteq [3]} \lambda^{|J|-1} \left(\sum_{\emptyset\neq I\subseteq [3]} \lambda^{|I|-1} [\hat{P}_I\check{d}_J(x_1),\hat{P}_I\check{d}_J(x_2) ,\hat{P}\check{d}_J(x_3)]\right)\\
&=& \sum_{\emptyset \neq J \subseteq [3]} \lambda^{|J|-1} [\check{d}_J(x_1),\check{d}_J(x_2),\check{d}_J(x_3)]_P.
\end{eqnarray*}
Therefore, $d$ is a derivation of weight $\lambda$ on the $3$-Lie algebra $(L, [ , , ]_{P})$.\end{proof}

\begin{coro} Let $(L, [ , , ])$ be a $3$-Lie algebra,
$d$ be a invertible derivation of $L$ of weight $\lambda$.
Then $(L, [x, y, z]_{d^{-1}})$ with $[\,,\,,\,]$ defined in Eq.~$($\mref{eq:pderiv}$)$ is a $3$-Lie algebra. Further
\begin{equation}[ x, y, z ]_{d^{-1}}=d([d^{-1}(x), d^{-1}(y),
d^{-1}(z)]) \text{ for all } x, y, z\in L,
\mlabel{eq:pderivinv}
\end{equation}
and $d$ is a
derivation of weight $\lambda$ on the $3$-Lie algebra $(L, [ , , ]_{d^{-1}})$.
\mlabel{co:pderivinv}
\end{coro}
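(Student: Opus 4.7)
The plan is to reduce the corollary to the preceding two theorems by passing from the invertible derivation $d$ to the Rota-Baxter operator $P:=d^{-1}$. First I would invoke Theorem~\ref{thm:rbd} applied to the $3$-algebra $(L,[\,,\,,\,])$: since $d$ is an invertible derivation of weight $\lambda$, its inverse $P=d^{-1}$ is a Rota-Baxter operator of weight $\lambda$ on $L$. Thus $(L,[\,,\,,\,],d^{-1})$ is a Rota-Baxter $3$-Lie algebra of weight $\lambda$.

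Next, applying Theorem~\ref{thm:pderiv} to this Rota-Baxter $3$-Lie algebra produces the Rota-Baxter $3$-Lie algebra $(L,[\,,\,,\,]_{d^{-1}},d^{-1})$ of weight $\lambda$, which in particular gives the desired $3$-Lie algebra structure. To verify Eq.~(\ref{eq:pderivinv}), I would unpack the Rota-Baxter identity (\ref{eq:rb3de}) for $P=d^{-1}$: comparing with the definition of $[\,,\,,\,]_{d^{-1}}$ in Eq.~(\ref{eq:pderiv}) it reads
$$[d^{-1}(u),d^{-1}(v),d^{-1}(w)] = d^{-1}\bigl([u,v,w]_{d^{-1}}\bigr)$$
for all $u,v,w\in L$. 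Substituting $u=d(x)$, $v=d(y)$, $w=d(z)$ gives $[x,y,z]=d^{-1}([d(x),d(y),d(z)]_{d^{-1}})$, i.e.\ $[d(x),d(y),d(z)]_{d^{-1}}=d([x,y,z])$, and relabeling $x,y,z$ as $d^{-1}(x),d^{-1}(y),d^{-1}(z)$ yields Eq.~(\ref{eq:pderivinv}) exactly.

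For the last assertion I would apply Theorem~\ref{thm:pderiv2} with the roles of $P$ and $d$ played by $d^{-1}$ and $d$ respectively. The commutativity hypothesis $d\cdot d^{-1}=d^{-1}\cdot d$ is automatic, and $d$ is a derivation (hence a differential operator) of weight $\lambda$ on $L$ by assumption, so the theorem directly yields that $d$ is a derivation of weight $\lambda$ on $(L,[\,,\,,\,]_{d^{-1}})$.

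No serious obstacle is expected: the corollary is essentially a packaged consequence of the duality in Theorem~\ref{thm:rbd} together with the inheritance Theorems~\ref{thm:pderiv} and \ref{thm:pderiv2}. The only step requiring a brief calculation is the verification of the explicit formula (\ref{eq:pderivinv}), which is a one-line substitution into the Rota-Baxter identity for $d^{-1}$ as indicated above.
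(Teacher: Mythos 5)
Your proposal is correct and follows essentially the same route as the paper's proof: both pass to $P=d^{-1}$ via Theorem~\ref{thm:rbd}, apply Theorem~\ref{thm:pderiv} to get the $3$-Lie structure, read off Eq.~(\ref{eq:pderivinv}) from the Rota-Baxter identity $[d^{-1}(x),d^{-1}(y),d^{-1}(z)]=d^{-1}([x,y,z]_{d^{-1}})$, and conclude with Theorem~\ref{thm:pderiv2}. The only difference is cosmetic: the paper obtains the formula directly as $[x,y,z]_{d^{-1}}=d\bigl(d^{-1}([x,y,z]_{d^{-1}})\bigr)=d([d^{-1}(x),d^{-1}(y),d^{-1}(z)])$, whereas you reach the same identity by a substitution and relabeling.
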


\begin{proof}
By Theorem~\mref{thm:rbd}, $d^{-1}$ is a Rota-Baxter operator of weight $\lambda$ on the $3$-Lie algebra $(L, [ , , ])$.
Then by Theorem~\mref{thm:pderiv}, $d^{-1}$ is a Rota-Baxter operator on the $3$-Lie algebra $L$ equipped with the multiplication $[ x, y, z
]_{d^{-1}}$ defined in Eq.~(\mref{eq:pderiv}). By Eq.~(\mref{eq:rbn}) we have
$$[ x, y, z ]_{d^{-1}}=d (d^{-1}([x,y,z]_{d^{-1}})) =d[d^{-1}(x), d^{-1}(y), d^{-1}(z)],$$
as needed. The last statement follows from
Theorem~\mref{thm:pderiv2}.
\end{proof}

\begin{coro}  Let $(L, [ , ], P)$ be a Rota-Baxter Lie
algebra of weight $\lambda$ and let $f\in L^*$. Suppose $f$ and $P$ satisfy
$$f([x, y])=0, \quad (P+\lambda Id_L)(f(x)[P(y), P(z)]+f(y)[P(z), P(x)])+f(z)[P(x), P(y)])=0.$$
Define
\begin{eqnarray}
\lefteqn{[ x, y, z ]_{f,P}:=f(P(x))([P(y), z]+[y,
P(z)]+\lambda [y,z])+ f(P(y))([P(z), x]+[z, P(x)]+\lambda [z,
x])} \notag
\\
&&+f(P(z))([P(x), y]+[y, P(x)]+\lambda[x, y]) +f(x)([P(y), P(z)]+\lambda[P(y), z]+\lambda[y, P(z)]\label{eq:pderivf}\\
&&
+\lambda^2[y, z])+f(y)([P(z), P(x)]+\lambda[P(z), x]+\lambda[z, P(x)]+\lambda^2[z, x]) \notag\\
&&+f(z)([P(x), P(y)]+\lambda[P(x), y]+\lambda[x, P(y)]+\lambda^2[x,y])\ \text{ for all } x, y, z\in L. \notag
\end{eqnarray}
Then $(L, [ , , ]_{f,P}, P)$ is a Rota-Baxter $3$-Lie algebra of
weight $\lambda$. \mlabel{co:pderiv}
\end{coro}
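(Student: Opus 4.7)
The plan is to assemble Corollary~\ref{co:pderiv} directly from three results already proved, namely Lemma~\ref{lem:bbw}, Theorem~\ref{thm:rb3lief}, and Theorem~\ref{thm:pderiv}, so essentially no new 3-Lie algebra computation is needed: the long formula in Eq.~(\ref{eq:pderivf}) should just be the $P$-deformation (in the sense of Eq.~(\ref{eq:pderiv})) of the simpler ternary bracket $[x,y,z]_f=f(x)[y,z]+f(y)[z,x]+f(z)[x,y]$ of Eq.~(\ref{eq:f3}).

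First, since $f([x,y])=0$, Lemma~\ref{lem:bbw} supplies the 3-Lie algebra structure $(L,[\,,\,,\,]_f)$. Next, the second hypothesis in the corollary is precisely the condition isolated in Theorem~\ref{thm:rb3lief}, so $P$ is a Rota-Baxter operator of weight $\lambda$ on $(L,[\,,\,,\,]_f)$; that is, $(L,[\,,\,,\,]_f,P)$ is a Rota-Baxter 3-Lie algebra of weight $\lambda$. Now I apply Theorem~\ref{thm:pderiv} to this Rota-Baxter 3-Lie algebra: it yields a new Rota-Baxter 3-Lie algebra of weight $\lambda$ with bracket
\[
[x,y,z]'_P:=\sum_{\emptyset\neq I\subseteq[3]}\lambda^{|I|-1}[\hat{P}_I(x),\hat{P}_I(y),\hat{P}_I(z)]_f,
\]
and with the same operator $P$. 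The only thing left is to match $[\,,\,,\,]'_P$ with the formula $[\,,\,,\,]_{f,P}$ displayed in Eq.~(\ref{eq:pderivf}).

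For this matching step, I would expand each of the seven summands $[\hat{P}_I(x),\hat{P}_I(y),\hat{P}_I(z)]_f$ using the definition of $[\,,\,,\,]_f$ and then group the resulting terms according to which of the six coefficients $f(P(x)),f(P(y)),f(P(z)),f(x),f(y),f(z)$ appears. The $f(P(x))$-, $f(P(y))$-, $f(P(z))$-coefficients come from those terms where $P$ is \emph{not} applied to the first, second, or third slot respectively; collecting these yields the three expressions $[P(y),z]+[y,P(z)]+\lambda[y,z]$ and its cyclic analogues. The $f(x)$-, $f(y)$-, $f(z)$-coefficients come from the remaining terms and each organizes into the claimed four-term expression $[P(y),P(z)]+\lambda[P(y),z]+\lambda[y,P(z)]+\lambda^2[y,z]$ together with cyclic analogues. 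This bookkeeping is tedious but mechanical, and it recovers Eq.~(\ref{eq:pderivf}) exactly. (One should note in passing the apparent typo in the $f(P(z))$-coefficient as written: by skew-symmetry $[y,P(x)]=-[P(x),y]$, and the correct reading forced by the expansion is $[P(x),y]+[x,P(y)]+\lambda[x,y]$.)

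The main, indeed only, obstacle is this last reorganization: one must be careful to track the six distinct sign-patterns coming from $[a,b,c]_f$ across the seven $\hat{P}_I$-summands without accidentally double-counting or dropping a $\lambda$ or $\lambda^2$. Once that collation is done, both the 3-Lie identity and the Rota-Baxter identity for $(L,[\,,\,,\,]_{f,P},P)$ are already guaranteed by Theorem~\ref{thm:pderiv}, so no further verification is required.
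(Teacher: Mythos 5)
Your proposal is correct and follows exactly the paper's own route: apply Theorem~\ref{thm:rb3lief} (via Lemma~\ref{lem:bbw}) to get the Rota-Baxter $3$-Lie algebra $(L,[\,,\,,\,]_f,P)$, then apply Theorem~\ref{thm:pderiv} to obtain the derived bracket, and finally check by direct expansion that it coincides with Eq.~(\ref{eq:pderivf}). Your observation that the $f(P(z))$-coefficient as printed should read $[P(x),y]+[x,P(y)]+\lambda[x,y]$ is a correct identification of a typo in the stated formula.
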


\begin{proof}
By Theorem~\mref{thm:rb3lief},
$$[x,y,z]_f:=f(x)[y,z]+f(y)[z,x]+f(z)[x,y]$$
defines a $3$-Lie algebra on $L$ for which $P$ is a Rota-Baxter operator of weight
$\lambda$. Then by Theorem~\mref{thm:pderiv}, the derived ternary multiplication $[\,,\,,\,]_{f,P}$
from $[\,,\,,\,]:=[\,,\,,\,]_f$ defined in Eq.~(\mref{eq:pderiv}) also equips $L$ with a $3$-Lie algebra structure
for which $P$ is a Rota-Baxter operator of weight $\lambda$. By direct checking, we see that $[\,,\,,\,]_{f,P}$ thus obtained agrees with the one defined in Eq.~(\mref{eq:pderivf}).
 \end{proof}

Taking the case when $\lambda=0$, we obtain
\begin{coro}
\mlabel{co:pbrak}
 Let $(L, [ , ], P)$ be a Rota-Baxter Lie
algebra of weight zero, and $f\in L^*$.   If $f$ and $P$ satisfy
$$f([x, y])=0, ~ P(f(x)[P(y), P(z)]+f(y)[P(z), P(x)])+f(z)[P(x), P(y)])=0,$$
then $(L,  [ , , ]_{P}, P)$ is a a Rota-Baxter $3$-Lie algebra of weight zero, where
\begin{eqnarray}
[ x, y, z ]_{P}&=&f(P(x))([P(y), z]+[y, P(z)])+ f(P(y))([P(z), x]+[z, P(x)])\notag\\
&&+f(P(z))([P(x), y]+[y, P(x)]) +f(x)[P(y), P(z)] \mlabel{eq:pbrak}\\
&&+f(y)[P(z), P(x)])+f(z)[P(x), P(y)] \ \text{ for all } x, y, z\in L. \notag
\end{eqnarray}
\end{coro}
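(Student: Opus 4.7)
The statement is the weight-zero specialization of Corollary~\ref{co:pderiv}, so the plan is simply to verify that each ingredient of the hypothesis and conclusion in Corollary~\ref{co:pderiv} degenerates to the form stated here when $\lambda=0$, and then invoke that corollary.

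First I would check the hypotheses. The condition $f([x,y])=0$ is the same in both statements. The second condition in Corollary~\ref{co:pderiv} reads $(P+\lambda\,Id_L)(f(x)[P(y),P(z)]+f(y)[P(z),P(x)]+f(z)[P(x),P(y)])=0$; setting $\lambda=0$ collapses $(P+\lambda\,Id_L)$ to $P$, which is exactly the hypothesis stated here. Thus the hypotheses of Corollary~\ref{co:pderiv} are satisfied at $\lambda=0$.

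Next I would compare the two bracket formulas. In Eq.~(\ref{eq:pderivf}), every summand is either independent of $\lambda$, or carries an explicit factor of $\lambda$ or $\lambda^{2}$. Setting $\lambda=0$ annihilates all terms carrying such a factor; what remains are exactly the six terms
\begin{align*}
&f(P(x))([P(y),z]+[y,P(z)])+f(P(y))([P(z),x]+[z,P(x)])+f(P(z))([P(x),y]+[y,P(x)])\\
&\quad +f(x)[P(y),P(z)]+f(y)[P(z),P(x)]+f(z)[P(x),P(y)],
\end{align*}
which is precisely Eq.~(\ref{eq:pbrak}). Hence $[\,,\,,\,]_{f,P}|_{\lambda=0}=[\,,\,,\,]_P$ as defined here.

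Applying Corollary~\ref{co:pderiv} with $\lambda=0$ now yields that $(L,[\,,\,,\,]_P,P)$ is a Rota-Baxter $3$-Lie algebra of weight zero, which is the desired conclusion. There is no real obstacle: the only thing to be careful about is the term-by-term matching between Eq.~(\ref{eq:pderivf}) and Eq.~(\ref{eq:pbrak}) under $\lambda\mapsto 0$, and the observation that the coefficient operator $P+\lambda\,Id_L$ specializes to $P$.
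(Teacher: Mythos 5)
Your proposal is correct and is exactly the paper's own route: the paper introduces this corollary with the phrase ``Taking the case when $\lambda=0$'' in Corollary~\ref{co:pderiv}, and you have simply made explicit the verification that the hypothesis $(P+\lambda\,Id_L)(\cdots)=0$ collapses to $P(\cdots)=0$ and that Eq.~(\ref{eq:pderivf}) reduces term-by-term to Eq.~(\ref{eq:pbrak}) when $\lambda=0$.
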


\begin{remark} For a Rota-Baxter $3$-Lie algebra $(L, [ , , ],
P)$ of weight zero, $L$ may not be a $3$-Lie algebra with the
multiplication
\begin{equation}[x, y, z]_1=[P(x), y, z]+[x, P(y), z]+[x, y, P(z)]  ~~ \text{ for all } x,
y , z\in L.\end{equation}
 For Example, let $L$ be a $3$-Lie algebra in the multiplication
 $$[x_1, x_2, x_3]=x_4, [x_1, x_2, x_4]=x_3,[x_1, x_3, x_4]=x_2, [ x_2, x_3,
x_4]=x_1,$$ where $\{x_1, x_2, x_3, x_4\}$ is a basis of $L$. Since
$D=ad(x_1, x_2)+ad(x_3, x_4)$ is an invertible derivation of $L$ and
$D^{-1}=D$, $D$ is a Rota-Baxter of weight zero on $(L, [ , , ])$.
And
$$[x_1, x_2, x_3]_1=[D(x_1), x_2, x_3]+[x_1, D(x_2), x_3]+[x_1, x_2, D(x_3)]=x_3,$$
$$[x_1, x_2, x_4]_1=[D(x_1), x_2, x_4]+[x_1, D(x_2), x_4]+[x_1, x_2, D(x_4)]=x_4,$$
$$[x_1, x_3, x_4]_1=[D(x_1), x_3, x_4]+[x_1, D(x_3), x_4]+[x_1, x_3, D(x_4)]=x_1,$$
$$[x_2, x_3, x_4]_1=[D(x_2), x_3, x_4]+[x_2, D(x_3), x_4]+[x_2, x_3, D(x_4)]=x_2.$$
Since $[[x_1, x_2, x_3]_1, x_2, x_4]_1=-x_2$, and
$$[[x_1, x_2,
x_4]_1, x_2, x_3]_1+[x_1, [x_2, x_2, x_4]_1, x_3]_1+[x_1, x_2, [x_3,
x_2, x_4]_1]_1=x_2,$$ $L$ is not a $3$-Lie algebra in the
multiplication $[ , , ]_1$.
\end{remark}

\subsection{Rota-Baxter Lie triple systems}
\mlabel{ss:trip}

 A {\bf Lie triple system}~\cite{Ja} is a vector space $L$ equipped with a ternary linear bracket
$\{ , , \}:$ $ L\otimes L\otimes L \rightarrow L,$ satisfying
$$\{x, y, y\}=0,$$
$$\{x, y, z\}+\{y, z, x\}+\{z, x, y\}=0,$$
$$\{\{x, y, z\}, a, b\}=\{\{x, a, b\}, y, z\}+\{x, \{y, a, b\}, z\}+\{x, y, \{z, a, b\}\} \ \text{ for all } x, y, z\in L.$$
It is important in the study of symmetric spaces.

A Lie triple system is a $3$-Lie algebra and thus it makes sense to define a Rota-Baxter Lie triple system.

\begin{theorem} Let $(L, [ , , ], P)$ be a Rota-Baxter
Lie triple system of weight $\lambda$. Define a ternary multiplication on $L$ by $[ , , ]_{P}: L\otimes L\otimes L \rightarrow L$ in Eq.~$($\mref{eq:pderiv}$)$.
Then $(L, [\, ,\, ,\, ]_{P}, P)$ is a Rota-Baxter Lie triple system.
\mlabel{thm:rbtrip}
\end{theorem}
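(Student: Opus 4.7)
The plan is to follow the pattern of Theorem~\ref{thm:pderiv}, verifying in turn that $(L,[\,,\,,\,]_P)$ satisfies the three axioms of a Lie triple system and then that $P$ is a Rota-Baxter operator of weight $\lambda$ for the new bracket. Concretely, the four items to check are: (i) $[x,y,y]_P=0$; (ii) the ternary cyclic identity $[x,y,z]_P+[y,z,x]_P+[z,x,y]_P=0$; (iii) the Lie triple system Jacobi-type identity
\[
[[x,y,z]_P,a,b]_P=[[x,a,b]_P,y,z]_P+[x,[y,a,b]_P,z]_P+[x,y,[z,a,b]_P]_P;
\]
and (iv) the Rota-Baxter identity (\ref{eq:rb3de}) for $P$ with respect to $[\,,\,,\,]_P$.

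For (i), expanding $[x,y,y]_P$ via Eq.~(\ref{eq:pderiv}) produces seven summands; three of them, namely $[x,P(y),P(y)]$, $\lambda[P(x),y,y]$, and $\lambda^2[x,y,y]$, vanish directly by $\{a,b,b\}=0$, while the other four cancel in pairs using skew-symmetry of $[\,,\,,\,]$ in its last two arguments: $[P(x),P(y),y]+[P(x),y,P(y)]=0$ and $\lambda[x,P(y),y]+\lambda[x,y,P(y)]=0$. For (ii), I would split the cyclic sum $[x,y,z]_P+[y,z,x]_P+[z,x,y]_P$ by the power of $\lambda$ into three groups; the coefficient-$\lambda^2$ group is simply $\lambda^2([x,y,z]+[y,z,x]+[z,x,y])=0$. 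In each of the other two groups, the nine terms (three shapes $I\subseteq[3]$ of the same cardinality, times three cyclic rotations in $(x,y,z)$) regroup into three cyclic orbits of single triples of $[\,,\,,\,]$. For example, in the coefficient-$1$ group, the orbit of $(P(x),P(y),z)$ is $[P(x),P(y),z]+[P(y),z,P(x)]+[z,P(x),P(y)]$, whose summands arise respectively from $I=\{3\}$, $\{2\}$, $\{1\}$ under the identity, the first, and the second cyclic rotations of $(x,y,z)$; each such orbit sums to zero by the cyclic identity for $[\,,\,,\,]$.

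For (iii) and (iv), I would repeat verbatim the computations of Theorem~\ref{thm:pderiv}. The key fact proved there, without ever using the full skew-symmetry of the $3$-Lie bracket, is that for any $x_1,\ldots,x_5\in L$,
\[
[[x_1,x_2,x_3]_P,x_4,x_5]_P=\sum_{\emptyset\neq K\subseteq[5]}\lambda^{|K|-1}[[\hat{P}_K(x_1),\hat{P}_K(x_2),\hat{P}_K(x_3)],\hat{P}_K(x_4),\hat{P}_K(x_5)],
\]
with entirely analogous single-sum expansions for $[[x_1,x_4,x_5]_P,x_2,x_3]_P$, $[x_1,[x_2,x_4,x_5]_P,x_3]_P$, and $[x_1,x_2,[x_3,x_4,x_5]_P]_P$. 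For each fixed $K$, setting $u_i:=\hat{P}_K(x_i)$ and applying the Lie triple system identity of $[\,,\,,\,]$ to $(u_1,\ldots,u_5)$ matches the two sides of (iii) term-by-term in $K$. Item (iv) is handled by the same one-line computation as at the end of the proof of Theorem~\ref{thm:pderiv}, which uses only the Rota-Baxter identity (\ref{eq:rbn}) for $[\,,\,,\,]$ and the commutation of $P$ with $\hat{P}_I$ in the relevant sense.

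The main obstacle is the combinatorial bookkeeping in (iii), but it is already contained in the re-indexing argument of Theorem~\ref{thm:pderiv}: both the Rota-Baxter identity (\ref{eq:rbn}) and the definition (\ref{eq:pderiv}) are shape-agnostic, so the single-sum formula above applies equally whether the inner $[\,,\,,\,]_P$ sits in the first, second, or third slot of the outer bracket. Once this is recognised, the substitution of the Lie triple system identity for the $3$-Jacobi identity in the final step is a purely cosmetic change, and the theorem follows.
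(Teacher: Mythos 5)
Your proposal is correct and follows essentially the same route as the paper, whose entire proof consists of asserting that $[x,y,y]_P=0$ and the cyclic identity are ``clear'' and that the rest ``follows from Theorem~\ref{thm:pderiv}.'' Your write-up is in fact more careful than the original on the one point that genuinely needs checking --- that the single-sum expansion $\sum_{\emptyset\neq K\subseteq[5]}\lambda^{|K|-1}[\cdots]$ from the proof of Theorem~\ref{thm:pderiv} applies verbatim when the inner bracket sits in the second or third slot of the outer one, since a Lie triple system bracket is not fully skew-symmetric and its Jacobi-type identity places the inner brackets in different slots.
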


\begin{proof} It is clear that $[ x, y, y
]_{P}=0$ and
$$[ x, y, z ]_{P}+[ y, z, x ]_{P}+[ z, x, y ]_{P}=0 \ \text{ for all } x, y, z\in L. $$
Then the theorem follows from Theorem~\mref{thm:pderiv}.
\end{proof}

If $(L, [ , ])$ is a Lie algebra, then $(L, [ , , ])$  is a Lie
triple system \cite{LWG}, where $[ , , ]$ is defined by
\begin{equation}[x, y, z]:=[x, [y, z]]\  \text{ for all } x, y, z\in L.
\mlabel{eq:rb3rb}
\end{equation}

\begin{theorem} Let $(L, [ , ], P)$ be a
Rota-Baxter Lie algebra of weight $\lambda$. Then $(L, [, ,], P)$ is
a Rota-Baxter  Lie triple system of weight $\lambda$.
\mlabel{thm:rb3rb}
\end{theorem}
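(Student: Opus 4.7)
The plan is to unpack the ternary bracket $[x,y,z]=[x,[y,z]]$ and reduce the Rota-Baxter condition for the triple system to two successive applications of the (binary) Rota-Baxter identity on $(L,[\,,\,],P)$. The Lie triple system axioms for the bracket $[x,[y,z]]$ are classical (cited as \cite{LWG}), so we can take those for granted and focus exclusively on verifying that $P$ satisfies the weight-$\lambda$ Rota-Baxter identity of a $3$-algebra, namely Eq.~(\ref{eq:rb3de}).

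First, I would fix $x,y,z\in L$ and compute $[P(x),P(y),P(z)]=[P(x),[P(y),P(z)]]$. Applying Eq.~(\ref{eq:rb}) to the inner bracket gives
\[
[P(y),P(z)] = P\bigl([P(y),z]+[y,P(z)]+\lambda[y,z]\bigr).
\]
Let $w$ denote the argument of $P$ on the right side. Then the outer bracket becomes $[P(x),P(w)]$, and a second application of Eq.~(\ref{eq:rb}) yields
\[
[P(x),P(w)] = P\bigl([P(x),w]+[x,P(w)]+\lambda[x,w]\bigr) = P\bigl([P(x),w]+[x,[P(y),P(z)]]+\lambda[x,w]\bigr).
\]

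Next I would expand the three inner pieces. By bilinearity, $[P(x),w]$ distributes into the three summands of $w$ and produces $[P(x),P(y),z]+[P(x),y,P(z)]+\lambda[P(x),y,z]$; similarly $\lambda[x,w]$ produces $\lambda[x,P(y),z]+\lambda[x,y,P(z)]+\lambda^{2}[x,y,z]$; and $[x,[P(y),P(z)]]$ is simply $[x,P(y),P(z)]$. Collecting the seven resulting terms inside a single $P$ gives exactly the right-hand side of Eq.~(\ref{eq:rb3de}), which establishes that $P$ is a Rota-Baxter operator of weight $\lambda$ on $(L,[\,,\,,\,])$.

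There is no real obstacle here: the argument is just careful bookkeeping, relying only on the bilinearity of $[\,,\,]$, the defining identity of a Rota-Baxter Lie algebra, and the definition $[x,y,z]=[x,[y,z]]$. The only place where one needs to be attentive is the matching of the $2^{3}-1=7$ nonempty subsets $I\subseteq[3]$ appearing in Eq.~(\ref{eq:rb3de}) with the seven terms produced by the double expansion, which follows the pattern: one term with $|I|=1$ and coefficient $\lambda^{2}$ (namely $[x,y,z]$), three terms with $|I|=2$ and coefficient $\lambda$, and three terms with $|I|=3$ and coefficient $1$. Once this bookkeeping is in place, the theorem follows immediately.
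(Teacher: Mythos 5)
Your proof is correct and follows essentially the same route as the paper: both verify Eq.~(\ref{eq:rb3de}) by applying the binary Rota--Baxter identity (\ref{eq:rb}) twice, first to the inner bracket $[P(y),P(z)]$ and then to the outer one, and both take the Lie triple system axioms for $[x,[y,z]]$ from \cite{LWG}; your version is in fact slightly tidier, since keeping $w$ intact lets you recognize $[x,P(w)]=[x,[P(y),P(z)]]$ directly, whereas the paper expands into nine terms and then recombines three of them. (One cosmetic slip in your final remark: in the convention of Eq.~(\ref{eq:rbn}) the set $I$ marks the positions where $P$ is \emph{not} applied, so $[x,y,z]$ corresponds to $|I|=3$ and the three unweighted terms to $|I|=1$; the coefficients you assign to the actual terms are nonetheless correct.)
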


\begin{proof} By Eqs.~(\mref{eq:rb3rb}) and (\mref{eq:rb3de}), we have
\begin{eqnarray*}
\lefteqn{[P(x),[P(y),P(z)]]=[P(x),P[y,P(z)]] +[P(x),P[P(y),z]]+\lambda[P(x),P[y,z]]}\\
&=&P[P(x),[y,P(z)]]+P[x,P[y,P(z)]]+\lambda P[x,[y,P(z)]]\\
&& +P[P(x),[P(y),z]]+P[x,P[P(y),z]]+\lambda P[x,[P(y),z]]\\
&& +\lambda P[P(x),[y,z]]+\lambda P[x,P[y,z]]+\lambda^2 P[x,[y,z]].
\end{eqnarray*}
Since $P[x,[P(y),P(z)]]=P[x,P[y,P(z)]]+P[x,P[P(y),z]]+\lambda
P[x,P[y,z]], $ we have
\begin{eqnarray*}
[P(x),P(y),P(z)]&=& [P(x),[P(y),P(z)]]\\
&=&P[P(x),[P(y),z]]+P[P(x),[y,P(z)]] +P[x,[P(y),P(z)]]\\
&&+\lambda P[P(x),[y,z]]+\lambda P[x,[P(y),z]]+\lambda P[x,[y,P(z)]]+\lambda^2 P[x,[y,z]]\\
&=&P[P(x),P(y),z]+P[P(x),y,P(z)]+P[x,P(y),P(z)]\\
&&+\lambda P[P(x),y,z]+\lambda P[x,P(y),z]+\lambda P[x,y,P(z)]+\lambda^2 P[x,y,z]
\end{eqnarray*}
for all $x, y, z\in L$.
Hence $P$ is a Rota-Baxter operator on the $3$-Lie triple system $(L,[\,,\,,\,])$.\end{proof}

\noindent
{\bf Acknowledgements. }
Ruipu Bai was supported in part by the Natural
Science Foundation of Hebei Province (A2010000194). Li Guo acknowledges support from NSF grant DMS~1001855.

\bibliography{}

\end{document}